\definecolor{mygreen}{RGB}{11, 218, 81}
\definecolor{myblue}{RGB}{31, 81, 255}
\definecolor{mypurple}{RGB}{191, 64, 191}
\definecolor{myorange}{RGB}{255, 172, 28}
\definecolor{myred}{RGB}{220, 38, 38} 
\newcommand{\Redit}[1]{\textcolor{black}{#1}}
\newcommand{\Rours}[1]{\textcolor{black}{#1}}
\newcommand{\Review}[1]{\textcolor{black}{#1}}
\newcommand{\abs}[1]{\ensuremath{\left| #1 \right|}}
\newcommand{\esp}[1]{\ensuremath{\mathbb{E}\left[ #1 \right]}}
\newcommand{\espo}[2]{\ensuremath{\mathbb{E}_{ #2 }\left[ #1 \right]}}
\newcommand{\norm}[2]{\ensuremath{\left\lVert #1 \right\rVert}_#2}
\newcommand{\tr}[1]{\ensuremath{\mathsf{Tr}\left( #1 \right)}}
\newcommand*{\herm}{{\mathsf{H}}}
\newcommand{\simil}[1]{\ensuremath{\textnormal{sim}\left( #1 \right)}}
\newcommand{\ftheta}{\ensuremath{f_{\boldsymbol{\theta}}}}
\newcommand{\bdelta}{\ensuremath{\boldsymbol{\delta}}}
\newcommand{\Gn}{\ensuremath{\mathbb{G}}}
\newcommand{\Gt}{\ensuremath{\mathbb{G}_{\mathsf{G}}}}
\newcommand{\Gl}{\ensuremath{\mathbb{G}_{\mathsf{L}}}}
\newcommand{\Gc}{\ensuremath{\mathbb{G}_{\mathsf{C}}}}
\newcommand{\Sone}{\ensuremath{\mathsf{S}_{1}}}
\newcommand{\Stwo}{\ensuremath{\mathsf{S}_{2}}}
\newcommand{\Sthree}{\ensuremath{\mathsf{S}_{3}}}
\newcommand{\ej}[2][+]{%
    \ensuremath{\mathrm{e}^{%
        \IfStrEq{#1}{-}{-\mathrm{j}}{\mathrm{j}} #2 }%
    }}%
\DeclareMathOperator*{\argmax}{arg\,max}
\DeclareMathOperator*{\argmin}{arg\,min}
\newcommand{\argmink}{\argmin\nolimits_{\raisebox{0ex}{$\scriptstyle k$}}}
\newtheorem{remark}{Remark}[section]
\newtheorem{definition}{Definition}
\newtheorem{theorem}{Theorem}
\newtheorem{corollary}{Corollary}
\algnewcommand\algorithmicforeach{\textbf{for each}}
\begin{document}

\title{\Review{Model-based Implicit Neural Representation for\\ sub-wavelength Radio Localization}}

\author{
	Baptiste Chatelier, Vincent Corlay, Musa Furkan Keskin, Matthieu Crussière,\\ Henk Wymeersch, Luc Le Magoarou
    \thanks{Baptiste Chatelier is with Mitsubishi Electric R\&D Centre Europe, Univ. Rennes, INSA Rennes, CNRS, IETR-UMR 6164 and b\raisebox{0.2mm}{\scalebox{0.7}{\textbf{$<>$}}}com, Rennes, France (email: baptiste.chatelier@insa-rennes.fr).}%
    \thanks{Vincent Corlay is with Mitsubishi Electric R\&D Centre Europe and b\raisebox{0.2mm}{\scalebox{0.7}{\textbf{$<>$}}}com, Rennes, France (email: v.corlay@fr.merce.mee.com).}
    \thanks{Musa Furkan Keskin and Henk Wymeersch are with the Department of Electrical Engineering, Chalmers University of Technology, Gothenburg, Sweden (email: \{furkan ; henkw\}@chalmers.se).}%
    \thanks{Matthieu Crussière and Luc Le Magoarou are with Univ. Rennes, INSA Rennes, CNRS, IETR-UMR 6164 and b\raisebox{0.2mm}{\scalebox{0.7}{\textbf{$<>$}}}com, Rennes, France (email: \{matthieu.crussiere ; luc.le-magoarou\}@insa-rennes.fr).}%
    \thanks{This work has been supported by the SNS JU project 6G-DISAC under the EU's Horizon Europe Research and Innovation Program under Grant Agreement No 101139130, and the Swedish Research Council (VR) through the project 6G-PERCEF under Grant 2024-04390.}
	}



\maketitle

\begin{abstract}
    The increasing deployment of large antenna arrays at base stations has significantly improved the spatial resolution and localization accuracy of radio-localization methods. However, traditional signal processing techniques struggle in complex radio environments, particularly in scenarios dominated by non line of sight (NLoS) propagation paths, resulting in degraded localization accuracy. Recent developments in machine learning have facilitated the development of machine learning-assisted localization techniques, enhancing localization accuracy in complex radio environments. However, these methods often involve substantial computational complexity during both the training and inference phases. This work extends the well-established fingerprinting-based localization framework by simultaneously reducing its memory requirements and improving its accuracy. Specifically, a model-based neural network is used to learn the location-to-channel mapping, and then serves as a generative neural channel model. This generative model augments the fingerprinting comparison dictionary while reducing the memory requirements. The proposed method outperforms fingerprinting baselines by achieving sub-wavelength localization accuracy, even in \Review{complex static NLoS environments}. Remarkably, it offers an improvement by several orders of magnitude in localization accuracy, while simultaneously reducing memory requirements by an order of magnitude compared to classical fingerprinting methods.
\end{abstract}

\begin{IEEEkeywords}
    Model-based machine learning, Implicit Neural Representations, Data augmentation, Radio localization, Fingerprinting
\end{IEEEkeywords}

\section{Introduction}\label{sec:introduction}
    \IEEEPARstart{R}{adio} localization refers to the process of determining the location of an object, such as a transmitter, a receiver or a passive target, based on measurements of radio frequency electromagnetic waves. One of the earliest application of radio localization is radio detection and ranging (radar), which has been extensively used in both military, e.g. detecting adversarial aircrafts, and civilian contexts, e.g. air traffic control in civil aviation. Traditional radio localization techniques rely on the use of signal processing methods such as angle of arrival, time of arrival, time difference of arrival or round trip time: see~\cite{4343996} for an extensive review. These methods typically require multiple sensing nodes and assume the existence of a direct line of sight (LoS) propagation path with negligible non line of sight (NLoS) paths treated as nuisance. However, in complex radio environments, NLoS paths may be significant, resulting in degraded localization accuracy. 
    
    The evolution towards the sixth-generation of cellular networks has introduced the concept of integrated sensing and communication, wherein sensing tasks, such as localization, and communication functions are jointly developed to improve system efficiency and performance. \Review{This places the localization problematic at the core of future cellular communication systems, as these methods will be extensively utilized to enable a wide range of localization-based applications~\cite{6924849,9215972}.} From this perspective, it is crucial to develop localization methods that achieve high precision in complex radio propagation environments, to maximize the efficiency of future applications. One example of a radio-localization-aided application is the deployment of automated guided vehicles in smart factories, where precise positioning is essential for efficient and autonomous operations. \Review{Smart factories are typically complex indoor environments with challenging propagation conditions, i.e. quasi-static channel but with significant NLoS paths, where classical localization methods often suffer from degraded performance.}

    Over the past decade, rapid developments in the machine learning (ML) field have facilitated the development of learning-based localization methods, particularly those leveraging fingerprinting. These methods exploit patterns in a radio-environment related feature, e.g. received signal strength or channel coefficients, to perform localization by associating a given measured feature with a position, through the use of a ML method. Such approach has been studied by the $3$GPP~\cite{3gpp_ai_ml_2021,3gpp_ai_ml_positioning_2022}. In this context, it was reported that, in complex indoor radio environments, the $90\%$ quantile localization error metric could be reduced from approximately $15$ m with classical methods to within the range of $1-5$ m when leveraging ML-aided techniques~\cite{10188249}. However, these approaches still rely on channel estimates from multiple base stations, and the proposed neural architectures often introduce significant computational complexity during both training and inference phases, in addition to requiring large dataset sizes. This paper investigates to what extent the model-based machine learning paradigm~\cite{10056957,SIG113}, identified as a key strategy for improving the efficiency and interpretability of ML techniques, can be leveraged for effective radio localization. Given the well-known challenges of localization in complex radio environments, several key questions arise: \textit{Does a model-based machine learning localization method achieve superior performance in challenging radio environments compared to classical methods? Does it entail a lower computational complexity than other machine-learning methods?}

    \noindent \textbf{Contributions.} Our previous studies~\cite{chatelier_loc2chan_icassp24,chatelier_loc2chan24} introduced a novel neural architecture that combines the concept of implicit neural representation (see~\cite{INR_review} for an extensive literature review), with the model-based machine learning paradigm, in order to learn the location-to-channel mapping at the wavelength level. The proposed neural network in~\cite{chatelier_loc2chan24} is able to infer the channel coefficients between a base station (BS) equipped with multiple antennas operating at several frequencies, and a mono-antenna user equipment (UE) at a given location, \Redit{achieving high accuracy~\footnote{\Redit{For instance, the proposed trained network achieves a $-29$ dB normalized mean squared error when inferring channel coefficients in a ray-tracing simulated outdoor scenario.}}}. Once trained, this neural network can be used as a generative neural channel model on a given scene. This work investigates the use of this neural model to perform localization with sub-wavelength precision. The main contributions of this paper are as follows.
    \begin{itemize}
        \item A localization method taking advantage of the learned location-to-channel mapping to perform data augmentation in a fingerprinting framework is introduced in Definition~\ref{def:estimator}, and is illustrated by Fig.~\ref{fig:general_synoptic}.
        \item The performance and computational complexity of the proposed method are optimized in Section~\ref{subsec:perf_optim} using theoretical insights introduced in Corollary~\ref{corol:minima_distance}.
        \item Extensive experiments are conducted on realistic channels, showing the sub-wavelength precision of the proposed method and its computational efficiency. Compared to classical fingerprinting methods, the proposed approach improves the median localization accuracy by factors ranging from $10^2$ to $10^3$, while concurrently reducing memory requirements by a factor $10$. This is illustrated in Fig.~\ref{fig:memory_performance} where the proposed method is compared against the classical $k$-Nearest Neighbors ($k$-NN) fingerprinting method.
    \end{itemize}

    \begin{figure*}[t]
        \centering
        \begin{minipage}[b]{0.49\textwidth}
            \centering
            \includegraphics[scale=.78]{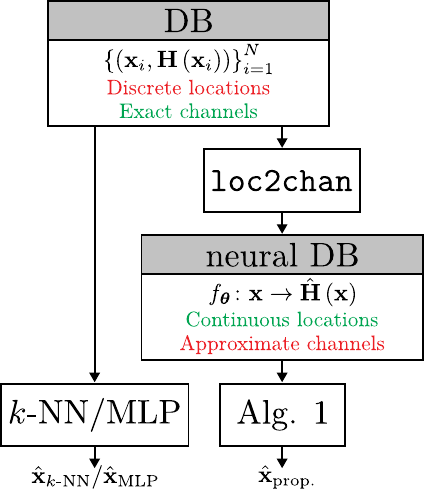}
            \caption{\Review{Overview of the proposed method: the left-hand side illustrates the classical $k$-NN/MLP approaches using a fixed database (DB), whereas the right-hand side depicts the proposed scheme, in which a neural network learning the location-to-channel mapping serves as a neural database.}}
            \label{fig:general_synoptic}
        \end{minipage}
        \hfill
        \begin{minipage}[b]{0.49\textwidth}
            \centering
            \includegraphics[scale=.6]{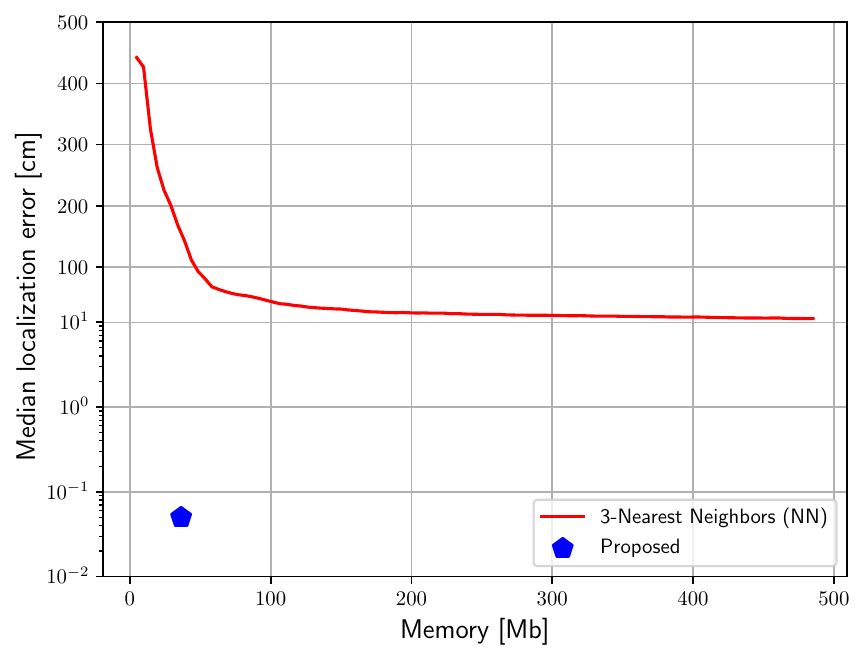}
            \caption{Memory-performance trade-off comparison. The memory requirements of the proposed method originates from the used neural network learnable parameters: further explanations are provided in Section~\ref{sec:experiments}.}
            \label{fig:memory_performance}
        \end{minipage}
    \end{figure*}

    \noindent \textbf{Related work.} ML-based localization methods is a well studied topic in the recent literature: the reader can see~\cite{9264122} and~\cite{burghal2020comprehensivesurveymachinelearning} for comprehensive surveys. Fingerprinting-based localization can rely on a wide range of ML methods: support vector machine~\cite{5779231}, $k$-NN~\cite{8485369}, decision trees~\cite{YIM20081296}, random forests~\cite{9050417}, principal component analysis~\cite{7743586}, but also deep neural networks~\cite{9128640,10188249}.

    On the other hand, the use of the model-based machine learning paradigm has been applied to several wireless communication problems such as precoding~\cite{Lavi23}, detection~\cite{Samuel17}, channel estimation~\cite{Hengtao18,Xiuhong21,yassine2022,Chatelier2022}, angle of arrival estimation~\cite{Shmuel2023,chatelier24_diffMUSIC}, channel charting~\cite{Yassine2022a,yassine2023modelbased,yassine2023charting,chatelier24csi}, and also in integrated sensing and communication scenarios~\cite{mateosramos2023semisupervised,Mateos_Ramos24}. Its use in a localization context has been studied in~\cite{9390409}, where a model-based neural network is employed to estimate the position and velocity of a moving UE. Specifically, these characteristics are obtained by solving a weighted least squares problem, where the weight matrix is dynamically adapted to the input measurements through a neural network. The primary distinction between~\cite{9390409} and the approach proposed in this paper lies in methodology: while~\cite{9390409} leverages data to dynamically solve a conventional least-squares problem, the proposed method exploits prior knowledge of the channel propagation model to perform data augmentation and guide the optimization process. Furthermore, this study considers a static UE and performs localization using only uplink channel coefficients obtained at a single BS, whereas~\cite{9390409} addresses a mobile UE and relies on measurements from multiple BSs processed centrally. Finally, to the best of the author's knowledge, this work is the first to explore the use of a generative neural channel model to perform localization. \Review{While a simplified evaluation setup with noiseless, static channels is considered, this work provides an initial validation of the proposed method, and still reflects relevant real-world scenarios. Indeed, the noiseless assumption is justified by the widespread use of pilot symbols in sensing systems. Additionally, while static or quasi-static propagation environments are not representative of all propagation environments, there exist contexts where this assumption is reasonable, e.g. in certain indoor environments, such as indoor factories \Rours{where precise localization supports automation}. In time-varying environments, periodic data collection combined with re-training of the proposed neural architecture could further extend the approach.}

    \noindent \textbf{Organization.} The rest of the paper is organized as follows. Section~\ref{sec:problem_formulation} presents the system model and defines the localization problem. Section~\ref{sec:proposed_method} presents the proposed location estimator along with theoretical results on its behavior. Section~\ref{sec:experiments} evaluates the proposed method performance and compares it against baselines on realistic synthetic data. Finally, Section~\ref{sec:conclusion} presents some conclusions and directions for future work.

    \noindent \textbf{Notations.} Lowercase bold letters represent vectors while uppercase bold letters represent matrices. $\mathbb{R}$ and $\mathbb{C}$ denote the real and complex fields, while $\mathbb{S}$ and $\mathbb{G}$ denote location sets. $\varnothing$ denotes the empty set. $\espo{\cdot}{\mathbf{x}\sim \mathcal{P}_{\mathbf{x}}}$ denotes the expectancy operator on the random variable $\mathbf{x}$, that follows the distribution $\mathcal{P}_{\mathbf{x}}$. $\norm{\cdot}{p}$ denotes the $\ell_p$ norm, while $\norm{\cdot}{\mathsf{F}}$ denotes the $\ell_{\mathsf{F}}$, Frobenius norm. $\abs{\cdot}$ denotes the absolute value operator for real numbers, modulus operator for complex numbers, and cardinality operator for sets. $\mathcal{C}\left(\mathbf{c},r\right)$ denotes the circle of center $\mathbf{c} \in \mathbb{R}^2$ and radius $r \in \mathbb{R}$. $\times$ denote the Cartesian product set operator. $\cap$ and $\cup$ denote the intersection and union set operators. $\otimes$ and $\odot$ denote the Kronecker and Hadamard matrix products. $\textnormal{diag}\left( \cdot \right)$ denotes the matrix operator constructing a diagonal matrix from a vector and $\textnormal{vec}\left( \cdot \right)$ denotes the vectorization operator. Finally, $\mathcal{U}\left(\mathbb{S}\right)$ denotes the uniform distribution on the subspace defined by $\mathbb{S}$, \Rours{$\mathcal{N}\left(\mu,\sigma^2\right)$ denotes the gaussian distribution of mean $\mu$ and variance $\sigma^2$}, and $\mathds{1}_{\left\{x\in \mathbb{A}\right\}}\left(x\right)$ denotes the indicator function over the set $\mathbb{A}$.

\section{System model and problem formulation}\label{sec:problem_formulation}
    This section defines the system model in addition to the localization problem.
    
    Let us consider a scene where the location space is a plane denoted by $\mathbb{S} \subset \mathbb{R}^2$. The considered task is the localization of a single-antenna UE located at $\mathbf{x} \in \mathbb{S}$, which transmits pilots on $N_s$ distinct frequencies to a BS equipped with $N_a$ antennas, using only the noiseless uplink channel matrix $\mathbf{H}\left(\mathbf{x}\right) \in \mathbb{C}^{N_a \times N_s}$. It is assumed that the considered propagation scene is static.
    \begin{remark}
        \Review{As mentioned in the introduction, the assumptions of noiseless and static channels are physically grounded and reflect practical scenarios, especially indoor environments such as factories, where precise localization is critical for automation.}
    \end{remark}
    
    This localization problem can be formally defined as determining a location estimator $\mathcal{E}^\star: \mathbb{D}_f \rightarrow \mathbb{S}$, that minimizes the mean localization error, i.e.:
    \begin{equation}
        \mathcal{E}^\star = \argmin_{\mathcal{E} \in \mathcal{X}} \espo{\norm{\mathbf{x}-\mathcal{E}\left(f\left(\mathbf{H\left(x\right)}\right)\right)}{2}}{\mathbf{x}\sim \mathcal{P}_{\mathbf{x}}},
    \end{equation}
    where $\mathcal{X}$ is the location estimator class, $f\left(\mathbf{H\left(x\right)}\right)$ is a channel feature, $\mathbb{D}_f$ being its domain, and $\mathcal{P}_{\mathbf{x}}$ denotes the true locations distribution over $\mathbb{S}$. The challenge lies in developing an estimator capable of providing accurate location estimates across the entire scene, with minimal a-priori knowledge. One classical approach in the localization literature is fingerprinting with propagation channel coefficients, which can be summarized as follows:
    \begin{enumerate}
        \item In a given propagation scene, one constitutes a dictionary of known UE locations $\mathcal{G}$:
        \begin{equation}
            \mathcal{G} = \left\{\tilde{\mathbf{x}}_i\right\}_{i=1}^{N_f} \subset \mathbb{S},
        \end{equation}
        and a dictionary of the corresponding propagation channel coefficients $\mathcal{H}$:
        \begin{equation}
            \mathcal{H} = \left\{\mathbf{H}\left(\tilde{\mathbf{x}}_i\right) \right\}_{i=1}^{N_f},
        \end{equation}
        where $\mathbf{H}\left(\tilde{\mathbf{x}}_i\right) \in \mathbb{C}^{N_a \times N_s}$ is the uplink channel matrix between the BS and UE located at $\tilde{\mathbf{x}}_i \in \mathbb{S}$.
        \item For the UE with unknown location $\mathbf{x}$, the BS estimates the uplink channel matrix $\mathbf{H}\left(\mathbf{x}\right)$. The estimated location is then obtained as:
        \begin{equation}\label{eq:fingerprinting_pb}
            \hat{\mathbf{x}}\left(\mathbf{H}\left(\mathbf{x}\right)\right) = \argmax_{\tilde{\mathbf{x}} \in \mathcal{G}} \simil{\mathbf{H}\left(\mathbf{x}\right),\mathbf{H}\left(\tilde{\mathbf{x}}\right)},
        \end{equation}
        with $\simil{\cdot,\cdot} \in \left]-\infty,1\right]$ being a channel similarity measure, optimal at $1$.
    \end{enumerate}
    This approach exhibits two shortcomings. Firstly, in order for the estimated location to be unique within $\mathcal{G}$, the channel function needs to be injective with respect to the similarity measure, as defined below.
    \begin{definition}\label{def:exact_injectivity}
        The channel function is said to be injective wrt. similarity measure iff, $\forall \left(\mathbf{x}_m,\mathbf{x}_n\right) \in \mathbb{S} \times \mathbb{S}$: 
        \begin{equation}
            \simil{\mathbf{H}\left(\mathbf{x}_m\right), \mathbf{H}\left(\mathbf{x}_n\right)} = 1 \Rightarrow \mathbf{x}_m = \mathbf{x}_n.
        \end{equation}
    \end{definition}

    Secondly, the precision of the estimated location is directly related to the spatial density of the location subset $\mathcal{G}$. Indeed, considering a uniform true locations distribution, a manifestation of the well-known curse of dimensionality~\cite[Chapter 2, pp.22-26]{ESL} reads:
    \begin{equation}\label{eq:fingerprinting_resolution}
        \espo{\norm{\mathbf{x}-\hat{\mathbf{x}}\left(\mathbf{H}\left(\mathbf{x}\right)\right)}{2}}{\mathbf{x}\sim \mathcal{P}_{\mathbf{x}}} \propto \left(\frac{A_s}{\abs{\mathcal{G}}}\right)^{\frac{1}{d}},
    \end{equation}
    where $\hat{\mathbf{x}}\left(\mathbf{H}\left(\mathbf{x}\right)\right)$ is obtained by solving Eq.~\eqref{eq:fingerprinting_pb}, $A_s$ is the area covered by $\mathbb{S}$, and $d=\dim_{\mathbb{R}}\left(\mathbb{S}\right) = 2$. Enhancing the localization accuracy thus necessitates an extensive measurement process, which poses challenges in both the time required for dictionary construction, and the memory requirements for storing the dictionary. \Review{For a fixed covered area $A_s$, dividing the localization error by a factor $n \in \mathbb{R}^*$ requires multiplying the cardinality of $\mathcal{G}$ by a factor $n^d$. This results in an impractically large dataset size, which represents the main limitation of classical fingerprinting methods.} The use of the model-based machine learning paradigm to overcome the limitations of classical fingerprinting methods is presented in the next section.

\section{Proposed method}\label{sec:proposed_method}
    This section presents how a model-based neural network learning the location-to-channel mapping can be used to improve the classical fingerprinting-based localization method.

    \subsection{Learning a generative neural channel model}\label{subsec:generative}
    It is proposed to use the neural architecture presented in~\cite{chatelier_loc2chan_icassp24}, and expanded in~\cite{chatelier_loc2chan24}, as a generative neural channel model. Considering $L_p$ propagation paths, the channel coefficients at frequency $f_k$ between antenna $j$ of the BS and the UE located at $\mathbf{x} \in \mathbb{S}$ can be modeled as:
    \begin{equation}\label{eq:channel_model}
        h_j \left(f_k,\mathbf{x}\right) = \sum_{l=1}^{L_p} \dfrac{\gamma_l}{d_{l,j}\left(\mathbf{x}\right)} \ej[-]{\frac{2\pi}{\lambda_k} d_{l,j}\left(\mathbf{x}\right)},
    \end{equation}
    where $\gamma_l \in \mathbb{C}$ accounts for wave-matter interactions of NLoS paths, $d_{l,j}\left(\mathbf{x}\right)$ is the length of the $l$th propagation path for the $j$th BS antenna, and $\lambda_k \triangleq c/f_k$~\footnote{\scriptsize{Eq.~\eqref{eq:channel_model} accounts only for direct propagation, reflection, and diffraction. Note that scattering effects can be incorporated by decoupling the distances used in the attenuation term (as a product of distances) and in the phase term (as a sum of distances).}}. Let $\mathbf{H}\left(\mathbf{x}\right) \in \mathbb{C}^{N_a \times N_s}$ be the uplink channel matrix at location $\mathbf{x}$, i.e. the concatenation of the channel coefficients, \Rours{presented in Remark.~\ref{remark:virtual_sources}}, for all antennas and frequencies. The neural network $\ftheta$, with $\boldsymbol{\theta}$ its set of learnable parameters, learns the mapping:
    \begin{equation}
        \begin{aligned}
            \ftheta\colon \mathbb{R}^2 &\longrightarrow \mathbb{C}^{N_a \times N_s}\\
            \mathbf{x} &\longrightarrow \mathbf{H}\left(\mathbf{x}\right),
        \end{aligned}
    \end{equation}
    by minimizing the loss function:
    \begin{equation}\label{eq:loss}
        \mathcal{J} = \frac{1}{\abs{\mathcal{D}}}\sum_{\left(\mathbf{x},\mathbf{H}\left(\mathbf{x}\right)\right)\in\mathcal{D}} \norm{\mathbf{H}\left(\mathbf{x}\right)-\ftheta\left(\mathbf{x}\right)}{\mathsf{F}}^2,
    \end{equation}
    over the training dataset $\mathcal{D}$ consisting of locations and associated uplink channel matrices.

    \begin{remark}
        In~\cite{chatelier_loc2chan24}, all theoretical developments are made for true locations $\mathbf{x} \in \mathbb{R}^3$. This study only considers the $\mathbb{R}^2$ scenario to simplify the associated theoretical analysis. This assumption is equivalent to having all UEs confined to the same elevation plane as the BS. All the proposed theoretical results can easily be extended to $\mathbb{R}^3$.
    \end{remark}

    \begin{remark}\label{remark:virtual_sources}
        Using the virtual source theory~\cite[Chapter 1, pp.47-49]{pozar2011microwave},~\cite{Deliang2014}, one can rewrite Eq.~\eqref{eq:channel_model} as:
        \begin{equation}\label{eq:channel_model_virtual_sources}
            h_j \left(f_k,\mathbf{x}\right) = \sum_{l=1}^{L_p} \dfrac{\gamma_l}{\norm{\mathbf{x}-\mathbf{a}_{l,j}}{2}} \ej[-]{\frac{2\pi}{\lambda_k} \norm{\mathbf{x}-\mathbf{a}_{l,j}}{2}},
        \end{equation}
        where $\mathbf{a}_{l,j} \in \mathbb{R}^2$ is the $j$th true antenna location, when $l = 1$, or $j$th virtual antenna location, when $l>1$. This channel model will be used in the subsequent theoretical developments.
    \end{remark}

    \begin{figure}[!t]
        \centering
        \includegraphics[width=.95\columnwidth]{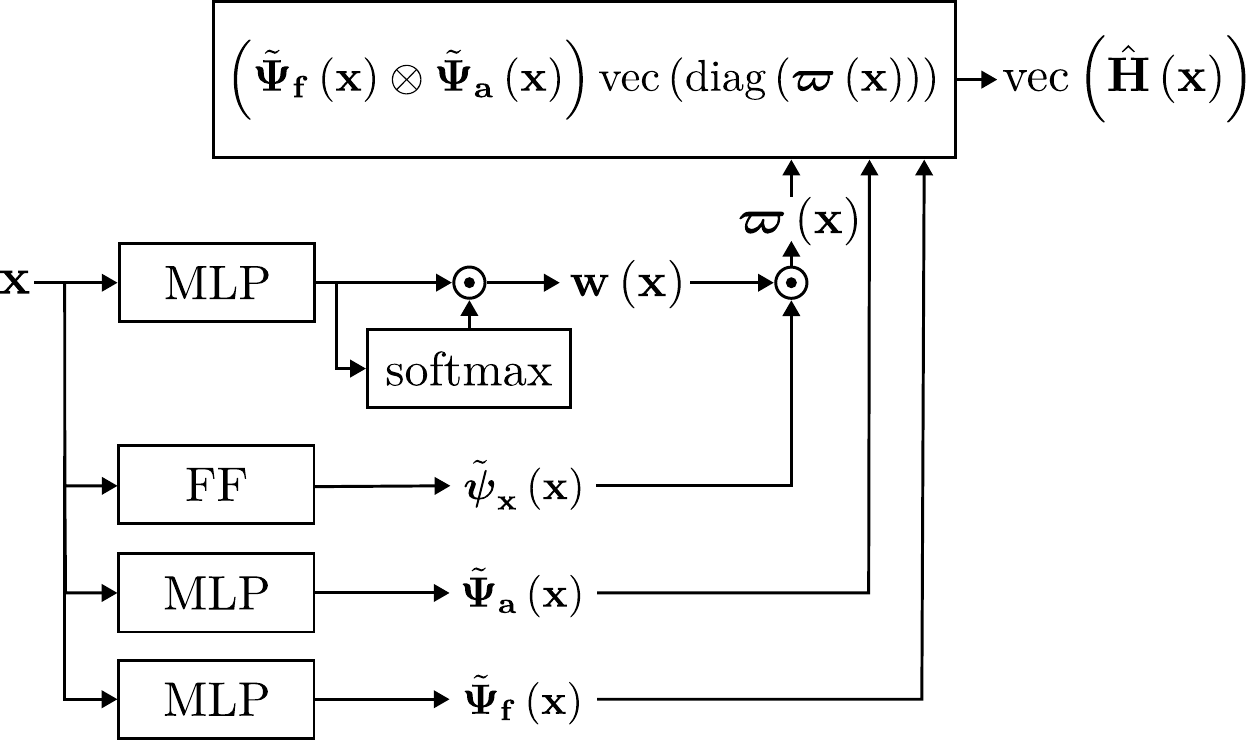}
        \caption{Architecture of the neural model $\ftheta$ proposed in~\cite{chatelier_loc2chan24}. The Fourier Feature (FF) embedding projects the input location $\mathbf{x}$ into a subspace that captures high frequency variations.}
        \label{fig:neural_archi}
    \end{figure}

    The architecture of the network proposed in~\cite{chatelier_loc2chan24} is recalled in Fig.~\ref{fig:neural_archi}. For a given input location $\mathbf{x}$, the associated channel $\mathbf{H}\left(\mathbf{x}\right)$ is infered as a sparse linear combination of steering vectors (denoted by the dictionary $\tilde{\mathbf{\Psi}}_{\mathbf{a}}\left(\mathbf{x}\right)$), of frequency response vectors (denoted by the dictionary $\tilde{\mathbf{\Psi}}_{\mathbf{f}}\left(\mathbf{x}\right)$) and planar wavefronts (denoted by the dictionary $\tilde{\boldsymbol{\psi}}_{\mathbf{x}}\left(\mathbf{x}\right)$), which accounts for local displacements. \Review{Each MLP consists of a distinct three-layer MLP hypernetwork, with hidden layer sizes of $256$ for the weight hypernetwork and $64$ for the antenna and frequency hypernetworks. In total, the model comprises about $9.1$ million learnable parameters.} It was shown that, after training, this neural architecture was able to infer the channel matrix at any wanted location in the scene considered during training, with minimal error, yielding the following definition.
    \begin{definition}
        \label{def:ftheta_error}
        Let $\mathbb{S} \subset \mathbb{R}^2$ be the set of locations covered by the considered scene, and $\ftheta$ the trained neural model. Then, the model estimation error $\Xi\left(\mathbf{x}\right)$ is bounded as, $\exists \epsilon_{\ftheta} \in \mathbb{R}^+$ s.t. $\forall \mathbf{x} \in \mathbb{S}:$
        \begin{equation}
            \Xi\left(\mathbf{x}\right) \triangleq \dfrac{\norm{\mathbf{H}\left(\mathbf{x}\right)-\ftheta\left(\mathbf{x}\right)}{\mathsf{F}}}{\norm{\mathbf{H}\left(\mathbf{x}\right)}{\mathsf{F}}} \leq \epsilon_{\ftheta}.
        \end{equation}
    \end{definition}

    The estimation error $\Xi\left(\mathbf{x}\right)$ of the proposed network remains low, even in complex radio environments. In~\cite{chatelier_loc2chan24}, the normalized mean squared error (NMSE), defined as $\espo{\Xi\left(\mathbf{x}\right)^2}{\mathbf{x} \sim \mathcal{P}_{\mathbf{x}}}$, was reported to be approximately $-40$ dB in LoS-only environments, $-29$ dB in LoS/NLoS environments and $-20$ dB in NLoS-only environments. \Redit{This good performance on the test dataset indicates that the proposed network learned to interpolate between training locations. This capability is enabled by architectural constraints that originate from a physical propagation channel model, as presented in details in~\cite{chatelier_loc2chan24}.} Furthermore, the proposed network was shown to effectively capture the fast-fading characteristics of the channel, i.e. small-scale fading, allowing it to infer distinct channel matrices for locations separated by sub-wavelength distances, while maintaining low estimation error. These findings support the feasibility of using $\ftheta$ as a generative neural channel model within the considered scene.

    \begin{remark}
        Once the location-to-channel mapping is learned, it can be used to generate any number of approximated channel at any given location in the scene, allowing to generate a potentially infinite database to be used for localization. Moreover, the mapping being continuous and differentiable, first order optimization method can be applied. These two properties are at the core of the proposed method, which is described in details next.
    \end{remark}

    \subsection{Application to localization}\label{subsec:localization_application}
    As presented in Eq.~\eqref{eq:fingerprinting_resolution} in Section~\ref{sec:problem_formulation}, the primary drawbacks of fingerprinting-based localization methods include the need for a substantial comparison dictionary, which in turn necessitates an extensive measurement campaign and imposes memory storage constraints. It is proposed to use the proposed model-based neural architecture to solve this dictionary size issue. Indeed, as $\ftheta$ can be seen as a generative neural channel model, an infinite number of comparison channels can be generated on the fly to solve Eq.~\eqref{eq:fingerprinting_pb},~\footnote{Note that, when comparing the approaches in Section~\ref{sec:experiments}, the amount of data required to train the generative neural model is taken into account.} as formalized in Definition~\ref{def:estimator}.
    \begin{mdframed}
        \begin{definition}\label{def:estimator}
            Let $\mathbb{S} \subset \mathbb{R}^2$ be the set of locations covered by the considered scene. Let $\mathbf{x} \in \mathbb{S}$ be the UE unknown location and $\mathbf{H}\left(\mathbf{x}\right)\in \mathbb{C}^{N_a \times N_s}$ be the associated uplink channel matrix. An estimate of the UE location is given by:
            \begin{align}
                \hat{\mathbf{x}}\left(\mathbf{H}\left(\mathbf{x}\right) \vert \boldsymbol{\theta}, \mathbb{S}\right) &\triangleq \argmin_{\tilde{\mathbf{x}} \in \mathbb{S}} \norm{\mathbf{H}\left(\mathbf{x}\right) - \ftheta\left(\tilde{\mathbf{x}}\right)}{\mathsf{F}}.\label{eq:mb_fingerprinting}
            \end{align}
        \end{definition}
    \end{mdframed}

    \begin{remark}
        Eq.~\eqref{eq:mb_fingerprinting} is obtained by considering the following similarity measure in Eq.~\eqref{eq:fingerprinting_pb}:
        \begin{equation}\label{eq:sim_measure}
            \simil{\mathbf{H}\left(\mathbf{x}_m\right), \mathbf{H}\left(\mathbf{x}_n\right)} = 1 - \norm{\mathbf{H}\left(\mathbf{x}_m\right)- \mathbf{H}\left(\mathbf{x}_n\right)}{\mathsf{F}}.
        \end{equation}
        This similarity measure is optimal when one-valued, as it requires the channel coefficients to be exactly equal across all antennas and subcarriers. Its injectivity properties and their implications will be discussed in Section~\ref{subsec:injectivity}. Since the Frobenius distance component of this similarity measure is central to the subsequent analysis, it is referred to as the phase-sensitive (PS) distance and defined as follows:
        \begin{equation}\label{eq:frob_sim_component}
            \mathcal{L}_{\mathsf{PS}}\left(\mathbf{H}\left(\mathbf{x}\right),\tilde{\mathbf{x}} \vert \boldsymbol{\theta}\right) \triangleq \norm{\mathbf{H}\left(\mathbf{x}\right) - \ftheta\left(\tilde{\mathbf{x}}\right)}{\mathsf{F}}.
        \end{equation}
    \end{remark}
    
    Given the ability of $\ftheta$ to effectively learn the location-to-channel mapping, the theoretical error performance of the proposed method can be established, as formalized in Theorem~\ref{thm:min_error}.
        \begin{mdframed}
        \begin{theorem}\label{thm:min_error}
            Let $\mathbb{S} \subset \mathbb{R}^2$ be the set of locations covered by the considered scene. Let $\mathbf{x} \in \mathbb{S}$ be the UE unknown location. Let $\hat{\mathbf{x}}\left(\mathbf{H}\left(\mathbf{x}\right) \vert \boldsymbol{\theta}, \mathbb{S}\right)$ be the solution of Eq.~\eqref{eq:mb_fingerprinting}. Let $\epsilon_{\ftheta} \in \mathbb{R}^+$ be the model estimation error bound defined in Definition~\ref{def:ftheta_error}. Assuming that Eq.~\eqref{eq:sim_measure} satisfies Definition~\ref{def:exact_injectivity}, it follows that:
            \begin{equation}
                \sup_{\mathbf{x} \in \mathbb{S}}\norm{\mathbf{x}-\hat{\mathbf{x}}\left(\mathbf{H}\left(\mathbf{x}\right) \vert \boldsymbol{\theta}, \mathbb{S}\right)}{2}\mathop{\longrightarrow}_{\epsilon_{\ftheta} \rightarrow 0} 0.
            \end{equation}
        \end{theorem}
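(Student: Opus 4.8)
The plan is to exploit the trained-model error bound from Definition~\ref{def:ftheta_error} to show that the PS distance $\mathcal{L}_{\mathsf{PS}}\left(\mathbf{H}\left(\mathbf{x}\right),\hat{\mathbf{x}} \mid \boldsymbol{\theta}\right)$ at the minimizer is small, then transfer this smallness in channel space back to smallness in location space using the injectivity of the similarity measure together with a compactness argument. First I would bound the value of the objective at the optimizer. Since $\hat{\mathbf{x}}$ minimizes $\norm{\mathbf{H}\left(\mathbf{x}\right) - \ftheta\left(\tilde{\mathbf{x}}\right)}{F}$ over $\mathbb{S}$ and $\mathbf{x} \in \mathbb{S}$ is itself a feasible point, we have $\norm{\mathbf{H}\left(\mathbf{x}\right) - \ftheta\left(\hat{\mathbf{x}}\right)}{F} \leq \norm{\mathbf{H}\left(\mathbf{x}\right) - \ftheta\left(\mathbf{x}\right)}{F} \leq \epsilon_{\ftheta}\norm{\mathbf{H}\left(\mathbf{x}\right)}{F}$, the last step by Definition~\ref{def:ftheta_error}. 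A triangle inequality then gives $\norm{\mathbf{H}\left(\mathbf{x}\right) - \mathbf{H}\left(\hat{\mathbf{x}}\right)}{F} \leq \norm{\mathbf{H}\left(\mathbf{x}\right) - \ftheta\left(\hat{\mathbf{x}}\right)}{F} + \norm{\ftheta\left(\hat{\mathbf{x}}\right) - \mathbf{H}\left(\hat{\mathbf{x}}\right)}{F} \leq \epsilon_{\ftheta}\bigl(\norm{\mathbf{H}\left(\mathbf{x}\right)}{F} + \norm{\mathbf{H}\left(\hat{\mathbf{x}}\right)}{F}\bigr)$. Assuming the channel norm is uniformly bounded on the (bounded) scene $\mathbb{S}$, say by $C$, this yields $\norm{\mathbf{H}\left(\mathbf{x}\right) - \mathbf{H}\left(\hat{\mathbf{x}}\right)}{F} \leq 2 C \epsilon_{\ftheta}$, i.e. $\simil{\mathbf{H}\left(\mathbf{x}\right), \mathbf{H}\left(\hat{\mathbf{x}}\right)} \geq 1 - 2C\epsilon_{\ftheta}$.

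Next I would make the transfer to location space quantitative through a contrapositive/compactness argument. For a fixed separation threshold $\eta > 0$, consider the set $K_\eta = \{(\mathbf{x}_m,\mathbf{x}_n) \in \mathbb{S}\times\mathbb{S} : \norm{\mathbf{x}_m - \mathbf{x}_n}{2} \geq \eta\}$, which is compact if $\mathbb{S}$ is closed and bounded. On $K_\eta$ the continuous function $(\mathbf{x}_m,\mathbf{x}_n) \mapsto \norm{\mathbf{H}\left(\mathbf{x}_m\right) - \mathbf{H}\left(\mathbf{x}_n\right)}{F}$ attains a positive minimum $\delta(\eta) > 0$ — positivity follows from injectivity (Definition~\ref{def:exact_injectivity}), since the minimum being zero would force $\simil{\cdot,\cdot}=1$ hence $\mathbf{x}_m = \mathbf{x}_n$, contradicting $\norm{\mathbf{x}_m-\mathbf{x}_n}{2}\geq\eta$. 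Therefore, whenever $2C\epsilon_{\ftheta} < \delta(\eta)$, the pair $(\mathbf{x},\hat{\mathbf{x}})$ cannot lie in $K_\eta$, so $\norm{\mathbf{x} - \hat{\mathbf{x}}}{2} < \eta$, and this bound is uniform over $\mathbf{x}\in\mathbb{S}$. Given any $\eta > 0$ we can thus pick $\epsilon_{\ftheta}$ small enough that $\sup_{\mathbf{x}\in\mathbb{S}}\norm{\mathbf{x} - \hat{\mathbf{x}}\left(\mathbf{H}\left(\mathbf{x}\right)\mid\boldsymbol{\theta},\mathbb{S}\right)}{2} < \eta$, which is precisely the claimed convergence as $\epsilon_{\ftheta}\to 0$.

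The main obstacle I anticipate is the need for continuity and compactness properties that are used implicitly: the argument requires $\mathbb{S}$ to be compact (closed and bounded), $\mathbf{x}\mapsto\mathbf{H}\left(\mathbf{x}\right)$ to be continuous, and the channel norm to be uniformly bounded and bounded away from zero on $\mathbb{S}$. Continuity of $\mathbf{H}$ is immediate from the virtual-source model in Eq.~\eqref{eq:channel_model_virtual_sources} provided no virtual antenna location $\mathbf{a}_{l,j}$ lies in $\mathbb{S}$ (so the $1/\norm{\mathbf{x}-\mathbf{a}_{l,j}}{2}$ terms stay finite), which is physically reasonable. I would state these as standing regularity assumptions on the scene. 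A secondary subtlety is that $\hat{\mathbf{x}}$ need not be unique; the statement should be read as holding for any minimizer, and the argument above is indeed minimizer-agnostic since it only uses the optimality inequality. Finally, one should note that the rate of convergence is governed by the modulus $\delta(\eta)$, i.e. by how sharply the channel separates distinct locations — this is exactly the quantity refined in Corollary~\ref{corol:minima_distance}, so a remark pointing forward to that sharper analysis would be appropriate.
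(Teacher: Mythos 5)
Your proof is correct, and it is substantially more complete than the one the paper actually gives. The paper's argument (Appendix~\ref{appendices:thm_min_error}) is a two-line sketch: it notes that $\epsilon_{\ftheta}\to 0$ forces $\ftheta(\mathbf{x})\to\mathbf{H}(\mathbf{x})$ for every $\mathbf{x}$, and then directly asserts, invoking injectivity and the definition of the estimator, that $\hat{\mathbf{x}}\to\mathbf{x}$ and hence that the supremum vanishes. The two steps it leaves implicit are precisely the ones you supply: (i) why small model error implies a small channel discrepancy $\norm{\mathbf{H}(\mathbf{x})-\mathbf{H}(\hat{\mathbf{x}})}{F}$ at the minimizer --- your optimality inequality followed by the triangle inequality, applying Definition~\ref{def:ftheta_error} at both $\mathbf{x}$ and $\hat{\mathbf{x}}$ --- and (ii) why a small channel discrepancy forces a small location discrepancy \emph{uniformly} in $\mathbf{x}$ --- your compactness argument extracting a positive modulus $\delta(\eta)$ on the separated set $K_\eta$ from the injectivity assumption. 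This second step matters: pointwise convergence of $\hat{\mathbf{x}}(\mathbf{x})$ to $\mathbf{x}$, which is all the paper's sketch delivers, does not by itself yield convergence of the supremum, so your quantitative route is needed to justify the theorem as stated. The cost is the standing regularity assumptions you flag (compact $\mathbb{S}$, continuity and uniform boundedness of $\mathbf{x}\mapsto\mathbf{H}(\mathbf{x})$, no virtual source inside $\mathbb{S}$, existence of a minimizer); these are never stated in the paper but are tacitly required, so identifying them is a genuine contribution of your write-up rather than a defect. Your closing remarks on non-uniqueness of the minimizer and on the link between the rate and Corollary~\ref{corol:minima_distance} are also apt.
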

    \end{mdframed}
    \begin{proof}
        See Appendix~\ref{appendices:thm_min_error}.
    \end{proof}

    Theorem~\ref{thm:min_error} demonstrates that the location estimator presented in Definition~\ref{def:estimator} achieves theoretically perfect accuracy while maintaining finite memory requirements. Indeed, channel coefficients can be evaluated on-the-fly using $\ftheta$, which only requires storing $\abs{\boldsymbol{\theta}}$ coefficients. However, the proposed method requires inferring the channel at every location in $\mathbb{S}$, which is uncountable. To address this, the scene location space is discretized using a finite grid $\mathbb{G} \subset \mathbb{S} \subset \mathbb{R}^2$. This discretization leads to the following localization error bound.
    \begin{mdframed}
        \begin{theorem}\label{thm:lambda_bound}
            Let $\mathbb{G} \subset \mathbb{S} \subset \mathbb{R}^2$ be the discretized location grid. Let $\mathbf{x} \in \mathbb{S}$ be the UE unknown location. Assuming that the true locations are uniformly distributed within $\mathbb{S}$, and that the grid is uniform with inter-element spacing $\nu \in \mathbb{R}^{+,*}$ yields:
            \begin{equation}
                \espo{\norm{\mathbf{x}-\hat{\mathbf{x}}\left(\mathbf{H}\left(\mathbf{x}\right) \vert \boldsymbol{\theta}, \mathbb{G} \right)}{2}}{\mathbf{x}\sim \mathcal{P}_{\mathbf{x}}} \geq \nu \delta,
            \end{equation}
            with $\delta = \frac{1}{6} \left(\sqrt{2}+\ln\left(\sqrt{2}+1\right)\right) \simeq 0.38$.
        \end{theorem}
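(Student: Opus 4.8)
The plan is to exploit the only structural property of the estimator in Definition~\ref{def:estimator} that matters here: when the search set is the finite grid $\mathbb{G}$, the output $\hat{\mathbf{x}}\left(\mathbf{H}\left(\mathbf{x}\right)\vert\boldsymbol{\theta},\mathbb{G}\right)$ is necessarily one of the grid points. Hence, irrespective of which grid point the $\argmin$ actually selects, and irrespective of the model quality $\epsilon_{\ftheta}$, one has the deterministic pointwise bound $\norm{\mathbf{x}-\hat{\mathbf{x}}\left(\mathbf{H}\left(\mathbf{x}\right)\vert\boldsymbol{\theta},\mathbb{G}\right)}{2} \geq \min_{\mathbf{g}\in\mathbb{G}}\norm{\mathbf{x}-\mathbf{g}}{2} = \mathrm{dist}\left(\mathbf{x},\mathbb{G}\right)$. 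Taking the expectation over $\mathbf{x}\sim\mathcal{U}\left(\mathbb{S}\right)$ reduces the theorem to a purely geometric statement: the mean distance from a uniformly drawn point to the nearest node of a square grid of pitch $\nu$ is $\nu\delta$.

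Next I would compute that mean distance by a cell-wise reduction. Viewing $\mathbb{S}$ as tiled by the $\nu\times\nu$ cells induced by $\mathbb{G}$, the nearest-grid-point map is translation-invariant from cell to cell, so it suffices to average over a single cell $[0,\nu]^2$. Inside that cell the four corners are all in $\mathbb{G}$, and the nearest-neighbour rule partitions the cell into four congruent sub-squares of side $\nu/2$, each served by one corner; by symmetry it is enough to integrate the distance to the origin over $[0,\nu/2]^2$. This gives $\mathbb{E}\left[\mathrm{dist}\left(\mathbf{x},\mathbb{G}\right)\right] = \tfrac{4}{\nu^2}\int_0^{\nu/2}\!\int_0^{\nu/2}\sqrt{u^2+v^2}\,\mathrm{d}u\,\mathrm{d}v = \tfrac{\nu}{2}\int_0^1\!\int_0^1\sqrt{s^2+t^2}\,\mathrm{d}s\,\mathrm{d}t$ after the rescaling $u=(\nu/2)s$, $v=(\nu/2)t$.

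Then I would evaluate the dimensionless double integral. Switching to polar coordinates and using the symmetry of the unit square across its diagonal, $\int_0^1\!\int_0^1\sqrt{s^2+t^2}\,\mathrm{d}s\,\mathrm{d}t = \tfrac{2}{3}\int_0^{\pi/4}\sec^3\theta\,\mathrm{d}\theta$, and with the standard antiderivative $\int\sec^3\theta\,\mathrm{d}\theta = \tfrac12\bigl(\sec\theta\tan\theta + \ln\lvert\sec\theta+\tan\theta\rvert\bigr)$ this equals $\tfrac13\bigl(\sqrt{2}+\ln(\sqrt{2}+1)\bigr)$. Substituting back yields $\mathbb{E}\left[\mathrm{dist}\left(\mathbf{x},\mathbb{G}\right)\right] = \tfrac{\nu}{6}\bigl(\sqrt{2}+\ln(\sqrt{2}+1)\bigr) = \nu\delta$, and combining with the pointwise bound gives $\espo{\norm{\mathbf{x}-\hat{\mathbf{x}}\left(\mathbf{H}\left(\mathbf{x}\right)\vert\boldsymbol{\theta},\mathbb{G}\right)}{2}}{\mathbf{x}\sim\mathcal{P}_{\mathbf{x}}}\geq\nu\delta$.

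The integral is routine; the genuine subtlety is the scene boundary, since cells only partially inside $\mathbb{S}$, or a grid that does not reach $\partial\mathbb{S}$, break the exactness of the cell-wise reduction. I would handle this the standard way for such grid arguments, namely assuming $\mathbb{S}$ is (essentially) tiled by the grid cells so that every point of $\mathbb{S}$ lies in a full cell whose four corners belong to $\mathbb{G}$; the computation above is then exact in the bulk. The inequality sign in the statement then comfortably absorbs both residual boundary contributions and the fact that the $\argmin$ in Definition~\ref{def:estimator} need not return the nearest grid point, with equality attained, in the bulk, precisely when it does.
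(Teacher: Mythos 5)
Your proof is correct and follows essentially the same route as the paper's: reduce the claim to the expected distance from a uniform point to its nearest grid node, restrict to a $\nu/2\times\nu/2$ sub-square by symmetry, and evaluate the resulting integral in polar coordinates via the $\sec^3$ antiderivative to obtain $\tfrac{\nu}{6}\left(\sqrt{2}+\ln\left(\sqrt{2}+1\right)\right)$. Your explicit statement of the deterministic pointwise bound $\norm{\mathbf{x}-\hat{\mathbf{x}}}{2}\geq\mathrm{dist}\left(\mathbf{x},\mathbb{G}\right)$ (since the estimate must be a grid point) and your remark on boundary cells are slightly cleaner than the paper's informal "best-case scenario" phrasing, but the argument is the same.
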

    \end{mdframed}
    \begin{proof}
        See Appendix~\ref{appendix:a}.
    \end{proof}

    \begin{remark}
        Theorem~\ref{thm:lambda_bound} states that the localization accuracy is lower-bounded by the grid inter-element spacing. This is unsurprising as solving Eq.~\eqref{eq:mb_fingerprinting} on the grid $\mathbb{G}$ is equivalent to performing a grid search, inherently constraining the localization accuracy to the grid resolution.
    \end{remark}

    To overcome this grid limitation, an off-grid approach is introduced: after the initial grid search, a gradient-based refinement with $N_{\nabla}$ steps is applied to enhance localization accuracy, as formalized in Eq.~\eqref{eq:gd_approach}. At step $k$, the update is expressed as:
    \begin{align}\label{eq:gd_approach}
        \hat{\mathbf{x}}^{\left(k+1\right)} \leftarrow  &\hat{\mathbf{x}}^{\left(k\right)} - \alpha \nabla_{\tilde{\mathbf{x}}} \mathcal{L}_{\mathsf{PS}}\left(\mathbf{H}\left(\mathbf{x}\right),\tilde{\mathbf{x}} \vert \boldsymbol{\theta}\right) \vert_{\hat{\mathbf{x}}^{\left(k\right)}},
    \end{align}
    where $\alpha \in \mathbb{R}^+$ controls the gradient contribution to the update, $\hat{\mathbf{x}}^{\left(k\right)}$ is the location estimate at step $k$, with $\hat{\mathbf{x}}^{\left(0\right)} = \hat{\mathbf{x}}\left(\mathbf{H}\left(\mathbf{x}\right) \vert \boldsymbol{\theta}, \mathbb{G} \right)$. As for all gradient-based methods, this localization refinement is prone to local minima error in the PS distance used as a loss function in Eq.~\eqref{eq:gd_approach}. A method to avoid local minima and increase performance is presented in Section~\ref{subsec:perf_optim}.

    \subsection{About the channel function injectivity}\label{subsec:injectivity}

    \begin{remark}
        \Redit{As discussed in Section~\ref{sec:problem_formulation}, ensuring that the channel function is injective wrt. the similarity measure defined in Eq.~\eqref{eq:sim_measure} is essential for the uniqueness of the location estimate. This can be demonstrated by deriving the global minimum of the PS distance.}
    \end{remark}
    The subsequent theoretical contributions can be summarized as follows:
    \begin{itemize}
        \item Theorem~\ref{thm:minima_frob} establishes a condition for reaching the global minimum of the Frobenius distance introduced in Eq.~\eqref{eq:frob_sim_component}.
        \item Corollary~\ref{corol:approx_injectivity} demonstrates that the channel function is injective wrt. the similarity measure defined in Eq.~\eqref{eq:sim_measure}, based on the results of Theorem~\ref{thm:minima_frob}.
        \item Corollary~\ref{corol:minima_distance} builds on Theorem~\ref{thm:minima_frob}, considering a dominant propagation path to define the distance between two consecutive minima in the Frobenius distance.
    \end{itemize}

    \begin{mdframed}
        \begin{theorem}\label{thm:minima_frob}
            For a given location $\mathbf{x} \in \mathbb{S}$, the Frobenius distance component of the similarity measure defined in Eq.~\eqref{eq:frob_sim_component}, rewritten as $\norm{\mathbf{H}\left(\mathbf{x}\right)-\mathbf{H}\left(\mathbf{x}+\bdelta\right)}{\mathsf{F}}$, reaches a global minimum when the displacement vector $\bdelta \in \mathbb{R}^2$ satisfies:
            \begin{equation}
                \bdelta \in \bigcap_{\left(i,j,l,k\right) \in \mathbb{L}} \mathcal{C}\left(\mathbf{a}_{l,i}-\mathbf{x},\norm{\mathbf{x} - \mathbf{a}_{l,i}}{2}+k \lambda_j\right). \label{eq:thm_condition_global_min}
            \end{equation}
            with $\mathbb{L} = \llbracket 1,N_a\rrbracket \times \llbracket 1,N_s\rrbracket \times \llbracket 1,L_p\rrbracket \times \mathbb{Z}$, and $\mathbf{a}_{l,i} \in \mathbb{R}^2$ the antenna locations defined in Remark~\ref{remark:virtual_sources}.
        \end{theorem}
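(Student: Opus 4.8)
The plan is to reduce the matrix identity to a scalar, entry-wise matching condition and then read off the geometry. First I would expand the squared Frobenius distance of Eq.~\eqref{eq:frob_sim_component} entry-wise,
\begin{equation*}
    \norm{\mathbf{H}\left(\mathbf{x}\right)-\mathbf{H}\left(\mathbf{x}+\bdelta\right)}{F}^{2} = \sum_{i=1}^{N_a}\sum_{j=1}^{N_s}\abs{\left[\mathbf{H}\left(\mathbf{x}\right)\right]_{i,j}-\left[\mathbf{H}\left(\mathbf{x}+\bdelta\right)\right]_{i,j}}^{2},
\end{equation*}
and note that, being a sum of non-negative terms, it is globally lower bounded by $0$, the bound being attained at $\bdelta=\mathbf{0}$. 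Hence the global minimum equals $0$, and it suffices to exhibit a sufficient condition on $\bdelta$ under which $\left[\mathbf{H}\left(\mathbf{x}+\bdelta\right)\right]_{i,j}=\left[\mathbf{H}\left(\mathbf{x}\right)\right]_{i,j}$ for every antenna index $i$ and frequency index $j$.

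Next I would substitute the virtual-source channel model of Remark~\ref{remark:virtual_sources}/Eq.~\eqref{eq:channel_model_virtual_sources} and impose the stronger — hence still sufficient — requirement that each propagation path contributes identically at $\mathbf{x}$ and at $\mathbf{x}+\bdelta$, i.e. for all $\left(i,j,l\right)$,
\begin{equation*}
    \frac{\gamma_l}{\norm{\mathbf{x}+\bdelta-\mathbf{a}_{l,i}}{2}}\ej[-]{\frac{2\pi}{\lambda_j}\norm{\mathbf{x}+\bdelta-\mathbf{a}_{l,i}}{2}}=\frac{\gamma_l}{\norm{\mathbf{x}-\mathbf{a}_{l,i}}{2}}\ej[-]{\frac{2\pi}{\lambda_j}\norm{\mathbf{x}-\mathbf{a}_{l,i}}{2}}.
\end{equation*}
Decoupling modulus from phase — legitimate because over the sub-wavelength displacements of interest the $1/\norm{\cdot}{2}$ attenuation is essentially constant, as in the amplitude/phase decoupling noted below Eq.~\eqref{eq:channel_model} — this reduces to the phase-matching condition $\frac{2\pi}{\lambda_j}\bigl(\norm{\mathbf{x}+\bdelta-\mathbf{a}_{l,i}}{2}-\norm{\mathbf{x}-\mathbf{a}_{l,i}}{2}\bigr)\in 2\pi\mathbb{Z}$, i.e. the existence of $k\in\mathbb{Z}$ with $\norm{\mathbf{x}+\bdelta-\mathbf{a}_{l,i}}{2}=\norm{\mathbf{x}-\mathbf{a}_{l,i}}{2}+k\lambda_j$.

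I would then translate this analytic condition into geometry: since $\norm{\mathbf{x}+\bdelta-\mathbf{a}_{l,i}}{2}=\norm{\bdelta-\left(\mathbf{a}_{l,i}-\mathbf{x}\right)}{2}$, the equality above states exactly that $\bdelta$ lies on the circle $\mathcal{C}\left(\mathbf{a}_{l,i}-\mathbf{x},\,\norm{\mathbf{x}-\mathbf{a}_{l,i}}{2}+k\lambda_j\right)$. Imposing it jointly over all $\left(i,j,l\right)\in\llbracket 1,N_a\rrbracket\times\llbracket 1,N_s\rrbracket\times\llbracket 1,L_p\rrbracket$, each with its own admissible integer $k$, is precisely membership of $\bdelta$ in the intersection appearing in Eq.~\eqref{eq:thm_condition_global_min}; any such $\bdelta$ makes every path contribution, hence every entry of $\mathbf{H}$, hence the phase-sensitive distance, equal to its value at $\bdelta=\mathbf{0}$, namely the global minimum $0$. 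Taking $k=0$ for every tuple recovers $\bdelta=\mathbf{0}$ and shows the set is non-empty.

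The step I expect to be the main obstacle is the amplitude/phase decoupling just invoked: if one insisted on exact modulus equality $\norm{\mathbf{x}+\bdelta-\mathbf{a}_{l,i}}{2}=\norm{\mathbf{x}-\mathbf{a}_{l,i}}{2}$ together with phase equality modulo $2\pi$, only $k=0$ would survive and the concentric-circle family would collapse to a single circle; the $k\neq 0$ circles are (near-)minimizers only once the slowly varying attenuation over $\bdelta$ is absorbed, which is the same modelling simplification underlying the planar-wavefront component of $\ftheta$ and must be stated explicitly. Secondary points I would address: arguing that per-path matching, while obviously sufficient, is also the operative condition — this follows from the generic linear independence of the complex exponentials $\ej[-]{\frac{2\pi}{\lambda_j}d}$ across the $N_s$ frequencies, which precludes accidental inter-path cancellations — and excluding the degenerate cases $\gamma_l=0$ and $\mathbf{x}=\mathbf{a}_{l,i}$.
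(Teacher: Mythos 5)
Your proposal is correct and follows essentially the same route as the paper's proof: both exploit the non-negativity of the Frobenius distance, impose the sufficient per-path cancellation condition $f_{i,j,l}\left(\bdelta\right)=0$ under the small-displacement approximation $d_{l,i}\left(\mathbf{x}\right)/d_{l,i}\left(\mathbf{x}+\bdelta\right)\simeq 1$, and translate the resulting phase-matching condition into membership in the intersection of circles. Your explicit flagging of the amplitude/phase decoupling as the step that makes the $k\neq 0$ circles admissible is exactly the approximation the paper invokes, so there is no gap.
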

    \end{mdframed}
    \begin{proof}
        See Appendix~\ref{appendix:c}.
    \end{proof}

    \begin{corollary}\label{corol:approx_injectivity}
        In the absence of symmetries in the scene including the antenna array, and considering $N_a > 1$ antennas, the channel function is injective wrt. the similarity measure defined in Eq.~\eqref{eq:sim_measure}.
    \end{corollary}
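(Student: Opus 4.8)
The plan is to convert the injectivity condition of Definition~\ref{def:exact_injectivity} into a geometric statement via Theorem~\ref{thm:minima_frob}, and then to show that the resulting set of admissible displacements reduces to the single point $\bdelta=\mathbf{0}$ once $N_a>1$ antennas are present and the environment has no symmetry. Unfolding Eq.~\eqref{eq:sim_measure}, the hypothesis $\simil{\mathbf{H}\left(\mathbf{x}_m\right),\mathbf{H}\left(\mathbf{x}_n\right)}=1$ is equivalent to $\norm{\mathbf{H}\left(\mathbf{x}_m\right)-\mathbf{H}\left(\mathbf{x}_n\right)}{F}=0$; writing $\mathbf{x}:=\mathbf{x}_n$ and $\bdelta:=\mathbf{x}_m-\mathbf{x}_n$, it therefore suffices to prove that $\norm{\mathbf{H}\left(\mathbf{x}\right)-\mathbf{H}\left(\mathbf{x}+\bdelta\right)}{F}=0$ implies $\bdelta=\mathbf{0}$.

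Since the Frobenius distance $\norm{\mathbf{H}\left(\mathbf{x}\right)-\mathbf{H}\left(\mathbf{x}+\bdelta\right)}{F}$ is non-negative and vanishes at $\bdelta=\mathbf{0}$, the value $0$ is its global minimum, so Theorem~\ref{thm:minima_frob} forces $\bdelta$ onto the intersection of the circles in Eq.~\eqref{eq:thm_condition_global_min}. Next I would pin down the integer index: reaching the value $0$ means exact channel equality on all $N_s$ frequencies, and --- barring an accidental cancellation among the $L_p$ path contributions, which is where the genericity of the scene enters --- each path must then be individually restored, so that in particular its amplitude factor $\gamma_l/\norm{\left(\mathbf{x}+\bdelta\right)-\mathbf{a}_{l,i}}{2}$ in Eq.~\eqref{eq:channel_model_virtual_sources} is preserved; since the radius $\norm{\mathbf{x}-\mathbf{a}_{l,i}}{2}+k\lambda_j$ equals $\norm{\mathbf{x}-\mathbf{a}_{l,i}}{2}$ only for $k=0$, this collapses the admissible circles to the $k=0$ ones. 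Hence $\norm{\left(\mathbf{x}+\bdelta\right)-\mathbf{a}_{l,i}}{2}=\norm{\mathbf{x}-\mathbf{a}_{l,i}}{2}$ for every antenna index $i$ and every path index $l$, i.e. every true antenna and every virtual source lies on the perpendicular bisector $\ell$ of the segment $\left[\mathbf{x},\mathbf{x}+\bdelta\right]$.

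If $\bdelta\neq\mathbf{0}$, then $\ell$ is a genuine line, and it already contains the two distinct true antenna positions $\mathbf{a}_{1,i_1}\neq\mathbf{a}_{1,i_2}$ available because $N_a>1$, so it is the line they determine. The reflection $\rho$ across $\ell$ fixes every true and virtual source, so by Eq.~\eqref{eq:channel_model_virtual_sources} the channel function satisfies $\mathbf{H}\circ\rho=\mathbf{H}$ while $\rho\left(\mathbf{x}\right)=\mathbf{x}+\bdelta$; equivalently, $\rho$ is a nontrivial symmetry of the propagation scene including the antenna array, contradicting the standing hypothesis. Therefore $\bdelta=\mathbf{0}$, i.e. $\mathbf{x}_m=\mathbf{x}_n$, which is the claimed injectivity. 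When the array is not collinear the contradiction already appears at the level of the antenna positions, since more than two non-collinear points cannot all lie on $\ell$; the scene part of the hypothesis is only needed to handle linear arrays, such as uniform linear arrays, for which $\ell$ is the array axis.

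The step I expect to be the main obstacle is making the symmetry argument watertight: one must check that fixing the finitely many source (and image) points genuinely forces $\rho$ to be a symmetry of the physical reflector configuration, and that the hypothesis ``no symmetry in the scene including the antenna array'' is exactly strong enough to exclude it. The secondary delicate point is the reduction to $k=0$: ruling out that the $L_p$ complex path contributions conspire to reproduce each channel entry while individually shifted by non-zero integer multiples of the respective wavelength --- which, once again, is precisely where genericity of the scene is invoked.
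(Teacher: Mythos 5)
Your proposal is correct and starts from the same pivot as the paper --- reduce injectivity to showing that $\bdelta=\mathbf{0}$ is the unique zero of $\norm{\mathbf{H}\left(\mathbf{x}\right)-\mathbf{H}\left(\mathbf{x}+\bdelta\right)}{F}$ and invoke the circle characterization of Theorem~\ref{thm:minima_frob} --- but you go considerably further than the paper's own proof, which is a two-line citation of that theorem ``under the no-ambiguity assumption'' and never says where the hypotheses $N_a>1$ and ``no symmetry'' actually bite. Your two additions are genuine content: (i) the modulus argument, using the exact $1/d_{l,i}$ amplitudes rather than the small-displacement approximation of Appendix~C, to force $k=0$ and hence collapse the circle family to the equidistance conditions; and (ii) the perpendicular-bisector/reflection argument showing that a nonzero admissible $\bdelta$ would manufacture an isometry $\rho$ fixing all true and virtual sources with $\rho\left(\mathbf{x}\right)=\mathbf{x}+\bdelta$, i.e.\ exactly the forbidden symmetry, with the clean dichotomy that a non-collinear array already yields a contradiction while a ULA needs the scene part of the hypothesis. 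This buys an explicit, checkable meaning for the corollary's assumptions, at the cost of the two caveats you yourself flag: ruling out accidental cancellation among the $L_p$ path terms (a gap the paper's sufficient-condition formulation of Theorem~\ref{thm:minima_frob} shares and does not address), and upgrading ``$\rho$ fixes the finitely many image points'' to ``$\rho$ is a symmetry of the physical scene.'' Both are exactly the genericity assumptions the paper sweeps into ``absence of ambiguity,'' so your version is a faithful but more honest and more informative rendering of the same argument.
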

    \begin{proof}
        See Appendix~\ref{appendix:d}.
    \end{proof}

    \begin{figure}[t]
        \centering
        \includegraphics[width=.95\columnwidth]{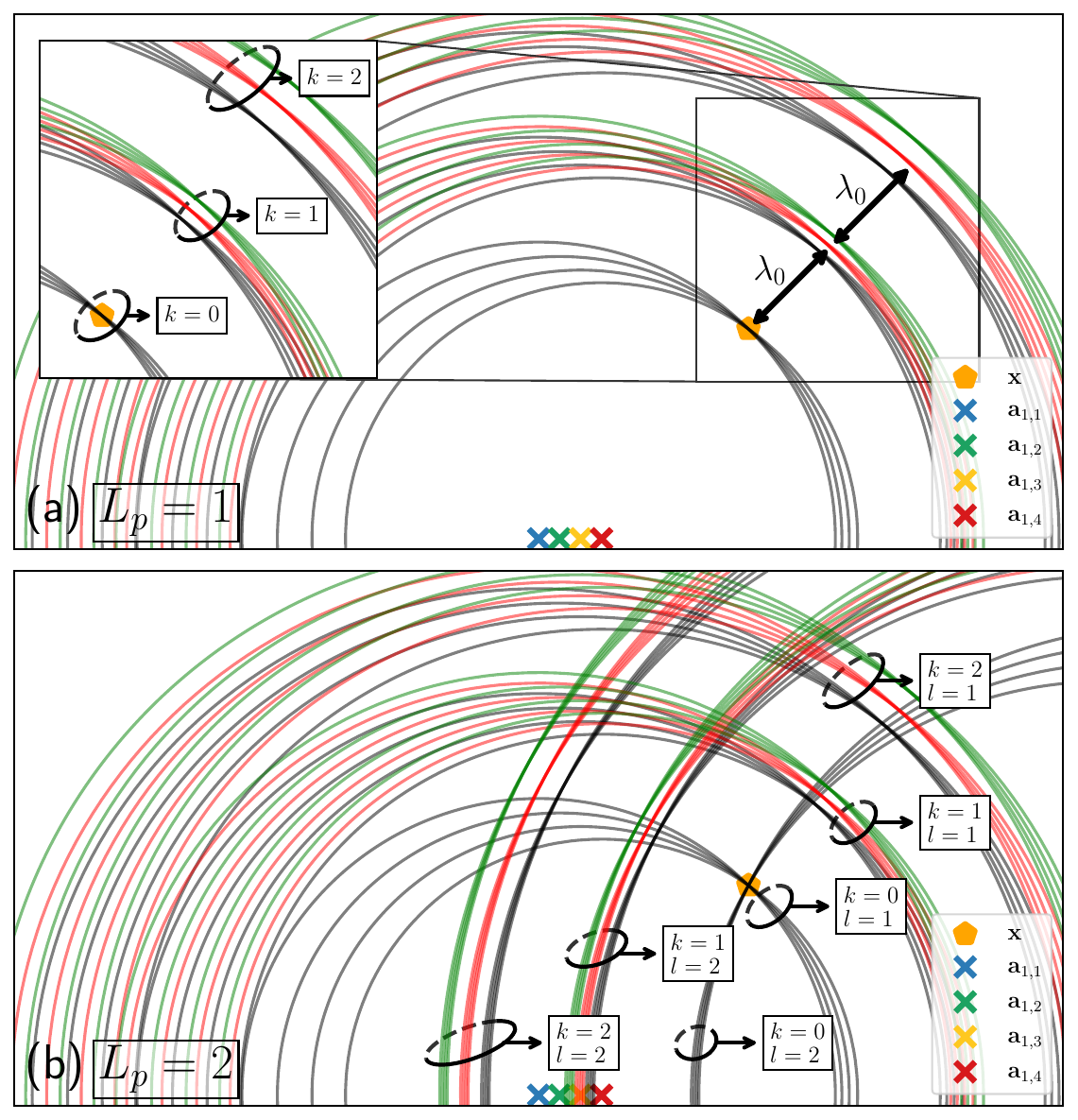}
        \caption{Illustration of Theorem~\ref{thm:minima_frob} in a vector space whose origin is $\mathbf{x}$: minimum circles for $N_a = 4$ antennas, $N_s = 3$ frequencies (color-coded as gray/red/green), and $k \in \llbracket 0, 2\rrbracket$. For $k=0$, circles originating from all frequencies coincide, as observed in Eq~\eqref{eq:thm_condition_global_min}.}
        \label{fig:circle_ambiguity}
    \end{figure}


    Fig.~\ref{fig:circle_ambiguity} illustrates Theorem~\ref{thm:minima_frob} and Corollary~\ref{corol:approx_injectivity} by depicting, for each antenna, path, frequency, and integer multiple $k$, the circles defined in Eq.~\eqref{eq:thm_condition_global_min}. These circles represent local minima, as a displacement vector along one of them nulls the contribution of a specific antenna, frequency, \Rours{and path in $\norm{\mathbf{H}\left(\mathbf{x}\right)-\mathbf{H}\left(\mathbf{x}+\bdelta\right)}{\mathsf{F}}$.} 
    \begin{remark}\label{remark:injectivity}
        As stated by Theorem~\ref{thm:minima_frob}, and visualized in Fig.~\ref{fig:circle_ambiguity}, the global minimum is attained when all circles intersect, which occurs at $\mathbf{x}$, only when $k=0$: this is equivalent to setting $\bdelta = 0_{\mathbb{R}^2}$. This holds only if the scene does not exhibit perfect topological symmetries, when including the antenna array. However, this assumption is weak, as perfect symmetries, up to the scene objects' electromagnetic properties, is unlikely to be found in real-world propagation environments.
    \end{remark}
    The size of the basins of attraction associated to minima in the Frobenius distance is influenced by the number of antennas and the considered bandwidth. As illustrated in Fig.~\ref{fig:circle_ambiguity} \textsf{(a)}, \Rours{a larger antenna array decreases the tangential extent of the region where the circles intersect. Conversely, the radial size of this region depends on $k\left(\lambda_{N_s}-\lambda_1\right), k \in \mathbb{Z}$, which is related to the bandwidth. Theoretically, for a sufficiently high $k$, or a sufficiently high bandwidth, it is possible to resolve one basin of attraction per frequency}. However, in practice, since the wavelength does not significantly vary over the system bandwidth, this distinction remains impractical. This insight clarifies the distance between consecutive minima, as formalized in Corollary~\ref{corol:minima_distance}.
    \begin{mdframed}
        \begin{corollary}\label{corol:minima_distance}
        For a given location $\mathbf{x} \in \mathbb{S}$, in presence of a dominant propagation path, the distance between two consecutive minima in $\norm{\mathbf{H}\left(\mathbf{x}\right)-\mathbf{H}\left(\mathbf{x}+\bdelta\right)}{\mathsf{F}}$ can be approximated as $\lambda_0$, where $\lambda_0 \triangleq 1/N_s \sum_{i\in\llbracket 1, N_s\rrbracket}\lambda_i$. 
        \end{corollary}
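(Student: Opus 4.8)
The plan is to specialize Theorem~\ref{thm:minima_frob} to the single-dominant-path regime and then invoke the narrowband nature of the system. Relabel the dominant path as $l=1$, and write $\rho_i \triangleq \norm{\mathbf{x}-\mathbf{a}_{1,i}}{2}$ and $\rho_i(\bdelta) \triangleq \norm{\mathbf{x}+\bdelta-\mathbf{a}_{1,i}}{2}$ for the distances between the UE and the (possibly virtual) source seen by antenna $i$ along this path.

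First I would expand the objective antenna-by-antenna and frequency-by-frequency,
\begin{equation}
\norm{\mathbf{H}(\mathbf{x})-\mathbf{H}(\mathbf{x}+\bdelta)}{F}^2 = \sum_{i=1}^{N_a}\sum_{j=1}^{N_s}\abs{h_i(f_j,\mathbf{x})-h_i(f_j,\mathbf{x}+\bdelta)}^2,
\end{equation}
and substitute the virtual-source model of Remark~\ref{remark:virtual_sources}. The dominant-path assumption $\abs{\gamma_1}\gg\abs{\gamma_l}$, $l>1$, allows discarding the $l>1$ contributions up to a relative error of order $\max_{l>1}\abs{\gamma_l}/\abs{\gamma_1}$, reducing each summand to
\begin{equation}
\abs{\gamma_1}^2\left(\frac{1}{\rho_i^2}+\frac{1}{\rho_i(\bdelta)^2}-\frac{2}{\rho_i\,\rho_i(\bdelta)}\cos\!\left(\tfrac{2\pi}{\lambda_j}\bigl(\rho_i-\rho_i(\bdelta)\bigr)\right)\right).
\end{equation}
For $\bdelta$ small enough that the amplitude prefactors are essentially constant, this summand is minimized exactly when the cosine equals one, i.e. $\rho_i(\bdelta)=\rho_i+k\lambda_j$ for some $k\in\mathbb{Z}$, equivalently $\bdelta\in\mathcal{C}\!\left(\mathbf{a}_{1,i}-\mathbf{x},\,\rho_i+k\lambda_j\right)$ — which is precisely the $l=1$ slice of the intersection condition~\eqref{eq:thm_condition_global_min}. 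For fixed $(i,j)$ these are concentric circles with consecutive radii differing by $\lambda_j$; equivalently, moving $\bdelta$ radially with respect to the dominant source $\mathbf{a}_{1,i}$, consecutive minima of the $(i,j)$ term are spaced by $\lambda_j$.

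The last step is the narrowband approximation: since the fractional bandwidth is small, all wavelengths are close to $\lambda_0=\frac{1}{N_s}\sum_{j=1}^{N_s}\lambda_j$, so each family $\{\mathcal{C}(\mathbf{a}_{1,i}-\mathbf{x},\rho_i+k\lambda_j)\}_k$ coincides — up to a radial displacement of order $\abs{k}\,(\lambda_{\max}-\lambda_{\min})$, negligible against $\lambda_0$ for the small $\abs{k}$ that lie within a basin of attraction — with the $j$-independent family $\{\mathcal{C}(\mathbf{a}_{1,i}-\mathbf{x},\rho_i+k\lambda_0)\}_k$. Consequently the simultaneous minimizers over all $(i,j)$ picked out by Theorem~\ref{thm:minima_frob} collapse, for the dominant path, onto this $\lambda_0$-periodic radial family, and the distance between two consecutive minima of $\norm{\mathbf{H}(\mathbf{x})-\mathbf{H}(\mathbf{x}+\bdelta)}{F}$ is $\approx\lambda_0$, as claimed.

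The principal obstacle is this narrowband collapse: for $k\neq0$ the radii $\rho_i+k\lambda_j$ do not match across $j$, so there is no exact joint intersection, and one must argue that the radial spread $\abs{k}(\lambda_{\max}-\lambda_{\min})$ remains small compared to $\lambda_0$ over the range of $k$ reachable inside a single basin — this is the quantitative content of the ``one cannot resolve a basin per frequency in practice'' observation preceding the corollary. A lighter, secondary obstacle is to make the dominant-path truncation rigorous, e.g. by applying the implicit function theorem to the gradient-nullity equation to show that restoring the $O(\max_{l>1}\abs{\gamma_l}/\abs{\gamma_1})$ terms shifts each minimum only by the same order, hence leaves the $\lambda_0$ spacing intact to leading order.
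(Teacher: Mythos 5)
Your proposal is correct and follows essentially the same route as the paper's proof: truncate to the dominant path in the Frobenius-norm expansion, specialize the circle-intersection condition of Theorem~\ref{thm:minima_frob} to that single path, and then invoke the narrowband approximation $\lambda_j\simeq\lambda_0$ to collapse the per-frequency circle families into a single $\lambda_0$-periodic radial family. Your explicit cosine expansion and your remarks on the non-exactness of the collapse for $k\neq 0$ are a slightly more detailed rendering of the same argument.
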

    \end{mdframed}
    \begin{proof}
        See Appendix~\ref{appendix:e}, and Fig.~\ref{fig:circle_ambiguity} \textsf{(a)} for an illustration.
    \end{proof}

    \begin{remark}
        When considering multipath propagation, each propagation path introduces an additional set of circles. In this case, the local minima form a grid-like pattern, as exposed in Fig.~\ref{fig:circle_ambiguity} \textsf{(b)}. It increases the complexity of the gradient descent procedure, making convergence to the global minimum more challenging. This is consistent with the well-established fact that localization in the presence of NLoS paths is a hard task.
    \end{remark}

    \begin{algorithm}[t]
        \caption{Proposed enhanced localization method: Off-grid (PI/PS).}
        \label{alg:prop_approach}
        \begin{algorithmic}[1]
            \Require Measured channel matrix $\mathbf{H}\left(\mathbf{x}\right) \in \mathbb{C}^{N_a \times N_s}$, trained neural model $\ftheta$, selected initialization loss $\mathcal{L}_{\mathsf{init}}$, global location grid $\Gt$, local grid size $L$, inter-element spacing $\nu$, number of circles $N_{\mathsf{C}}$, number of circle points $M_{\mathsf{C}}$.
            \State \textit{($\mathrm{a}$ in Fig.~\ref{fig:circle_explanation})}. Compute the local grid initialization location by performing a grid search on $\Gt$: $\tilde{\mathbf{x}}_{\mathrm{i}} \leftarrow \hat{\mathbf{x}}\left(\mathbf{H}\left(\mathbf{x}\right) \vert \boldsymbol{\theta}, \Gt\right)$, with:
            \begin{equation}\label{eq:init_grid_search}
                \hat{\mathbf{x}}\left(\mathbf{H}\left(\mathbf{x}\right) \vert \boldsymbol{\theta}, \Gt\right) = \argmin_{\tilde{\mathbf{x}} \in \Gt} \mathcal{L}_{\mathsf{init}}\left(\mathbf{H}\left(\mathbf{x}\right),\tilde{\mathbf{x}}\vert \boldsymbol{\theta}\right).
            \end{equation}
            \State Construct the local grid $\Gl$: size $L$ by $L$, centered at $\tilde{\mathbf{x}}_{\mathrm{i}}$, with inter-element spacing $\nu$ in both directions:
            \begin{equation}
                \Gl = \left(\tilde{x}_{\mathrm{i},0} + \mathbb{Z}_{\nu,L}\right) \times \left(\tilde{x}_{\mathrm{i},1} + \mathbb{Z}_{\nu,L}\right),
            \end{equation}
            with $\mathbb{Z}_{\nu,L} = \nu \mathbb{Z} \cap \left[-L/2,L/2 \right]$.
            \State \textit{($\mathrm{b}$ in Fig.~\ref{fig:circle_explanation})}. Perform a grid search on $\Gl$: $\tilde{\mathbf{x}}_{\mathrm{g}} \leftarrow \hat{\mathbf{x}}\left(\mathbf{H}\left(\mathbf{x}\right) \vert \boldsymbol{\theta}, \Gl\right)$, with:
            \begin{equation}
                \hat{\mathbf{x}}\left(\mathbf{H}\left(\mathbf{x}\right) \vert \boldsymbol{\theta}, \Gl\right) = \argmin_{\tilde{\mathbf{x}} \in \Gl} \mathcal{L}_{\mathsf{PS}}\left(\mathbf{H}\left(\mathbf{x}\right),\tilde{\mathbf{x}}\vert \boldsymbol{\theta}\right).
            \end{equation}
            \State \textit{($\mathrm{c}$ in Fig.~\ref{fig:circle_explanation})}. Perform $N_{\nabla}$ gradient steps from $\tilde{\mathbf{x}}_{\mathrm{g}}$, as defined in Eq.~\eqref{eq:gd_approach}, to obtain $\tilde{\mathbf{x}}_{\mathrm{gd}}$.
            \State Evaluate the Frobenius distance at the obtained location:
            \begin{equation}
                \gamma_{\mathrm{gd}} = \mathcal{L}_{\mathsf{PS}}\left(\mathbf{H}\left(\mathbf{x}\right),\tilde{\mathbf{x}}_{\mathrm{gd}} \vert \boldsymbol{\theta}\right).
            \end{equation}
            \State Sample locations on the circles centered at $\tilde{\mathbf{x}}_{\mathrm{gd}}$, with radii $k \lambda_0, k \in \llbracket 1,N_{\mathsf{C}} \rrbracket$:
            \begin{equation}
                \Gc \subset \bigcup_{k \in \llbracket 1,N_{\mathsf{C}} \rrbracket} \mathcal{C}\left(\tilde{\mathbf{x}}_{\mathrm{gd}}, k \lambda_0\right), \abs{\Gc} = M_{\mathsf{C}}.
            \end{equation}
            \State \textit{($\mathrm{d}$ in Fig.~\ref{fig:circle_explanation})}. Find the minimum Frobenius distance on the circles:
            \begin{align}
                \tilde{\mathbf{x}}_{\mathrm{c}}^{\star} &= \argmin\limits_{\tilde{\mathbf{x}}_{\mathrm{c}} \in \Gc} \mathcal{L}_{\mathsf{PS}}\left(\mathbf{H}\left(\mathbf{x}\right),\tilde{\mathbf{x}}_{\mathrm{c}} \vert \boldsymbol{\theta}\right)\\
                \gamma_{\mathrm{c}}^{\star} &= \mathcal{L}_{\mathsf{PS}}\left(\mathbf{H}\left(\mathbf{x}\right),\tilde{\mathbf{x}}_{\mathrm{c}}^{\star} \vert \boldsymbol{\theta}\right).
            \end{align}
            \Statex Compare the Frobenius distances:
            \If{$\gamma_{\mathrm{c}}^{\star} \leq \gamma_{\mathrm{gd}}$}
                \State \textit{($\mathrm{e}$ in Fig.~\ref{fig:circle_explanation})}. Perform $N_{\nabla}$ gradient steps from $\tilde{\mathbf{x}}_{\mathrm{c}}^{\star}$, as defined in Eq.~\eqref{eq:gd_approach}, to obtain $\tilde{\mathbf{x}}_{\mathrm{gd_2}}$.
                \State Update the estimated location: $\hat{\mathbf{x}} \leftarrow \tilde{\mathbf{x}}_{\mathrm{gd_2}}$.
            \Else 
                \State Update the estimated location: $\hat{\mathbf{x}} \leftarrow \tilde{\mathbf{x}}_{\mathrm{gd}}$.
            \EndIf
            \Ensure Estimated location: $\hat{\mathbf{x}}$.
        \end{algorithmic}
    \end{algorithm}

    \subsection{Performance and complexity optimization}\label{subsec:perf_optim}

    Hitherto, Theorem~\ref{thm:lambda_bound} has established the theoretical limitations of the grid-based approach, Eq.~\eqref{eq:gd_approach} has introduced a gradient-based refinement, and Corollary~\ref{corol:approx_injectivity} has demonstrated the suitability of the proposed similarity measure for this localization task. It is now proposed to optimize the performance of the proposed method, while reducing its computational complexity. For a given true location, the accuracy of the proposed method depends on two key assumptions. Firstly, the location grid $\Gn$ must include a point sufficiently close to the true location. Secondly, the gradient descent process must successfully converge to the global minimum. To satisfy these assumptions, the cardinality of $\Gn$ needs to be high, which entails significant computational complexity. Additionally, even under this condition, the effectiveness of gradient descent in reaching the global minimum remains uncertain. Indeed, as established in Corollary~\ref{corol:minima_distance}, the minima of the PS distance are separated by a distance approximately equal to the central wavelength, which, in conventional communication systems, typically ranges from centimeters to millimeters. This small spacing between local minima underscores the risk of convergence to a local minimum when conducting the gradient-based approach. The proposed enhanced localization method \Redit{is presented in Algorithm~\ref{alg:prop_approach}~\footnote{\Redit{The phase-insensitive (PI) term appearing in Algorithm~\ref{alg:prop_approach} is thoroughly described in the subsequent \textit{Complexity optimization} paragraph.}}}; details about performance and complexity optimization are subsequently presented, and Fig.~\ref{fig:circle_explanation} illustrates the proposed method and variants. Note that the \textit{naive} approaches use the large discretized location grid $\Gn$.

    \noindent\textbf{Performance optimization.} The theoretical knowledge of the distance between consecutive minima is incorporated into the gradient-based refinement procedure. Specifically, a dual-gradient-descent scheme is considered: the first gradient descent is performed from $\tilde{\mathbf{x}}_{\mathrm{g}}$, i.e. the location obtained through the on-grid approach. This approach allows to obtain the location estimate $\tilde{\mathbf{x}}_{\mathrm{gd}}$. Subsequently, $M_{\mathsf{C}}$ locations are sampled on $N_{\mathsf{C}}$ circles of center $\tilde{\mathbf{x}}_{\mathrm{gd}}$ and radii corresponding to integer multiples of the theoretical separation distance between minima, i.e. $k \lambda_0, k \in \mathbb{Z}$ (line $6$ in Algorithm~\ref{alg:prop_approach}). The minimum Frobenius distance on the sampled circle locations, i.e. $\gamma_{\mathrm{c}}^{\star}$, associated to location $\tilde{\mathbf{x}}_{\mathrm{c}}^{\star}$, is then compared to the value obtained at the location estimated through the initial gradient-descent procedure. If it is lower, it indicates the presence of a better minimum, suggesting the possibility of refining the location estimate towards the global minimum. In that case, a second gradient-descent procedure is performed from $\tilde{\mathbf{x}}_{\mathrm{c}}^{\star}$ to ensure convergence to the lowest point within the basin of attraction.

    \begin{figure*}[t]
        \centering
        \includegraphics[width=.95\textwidth]{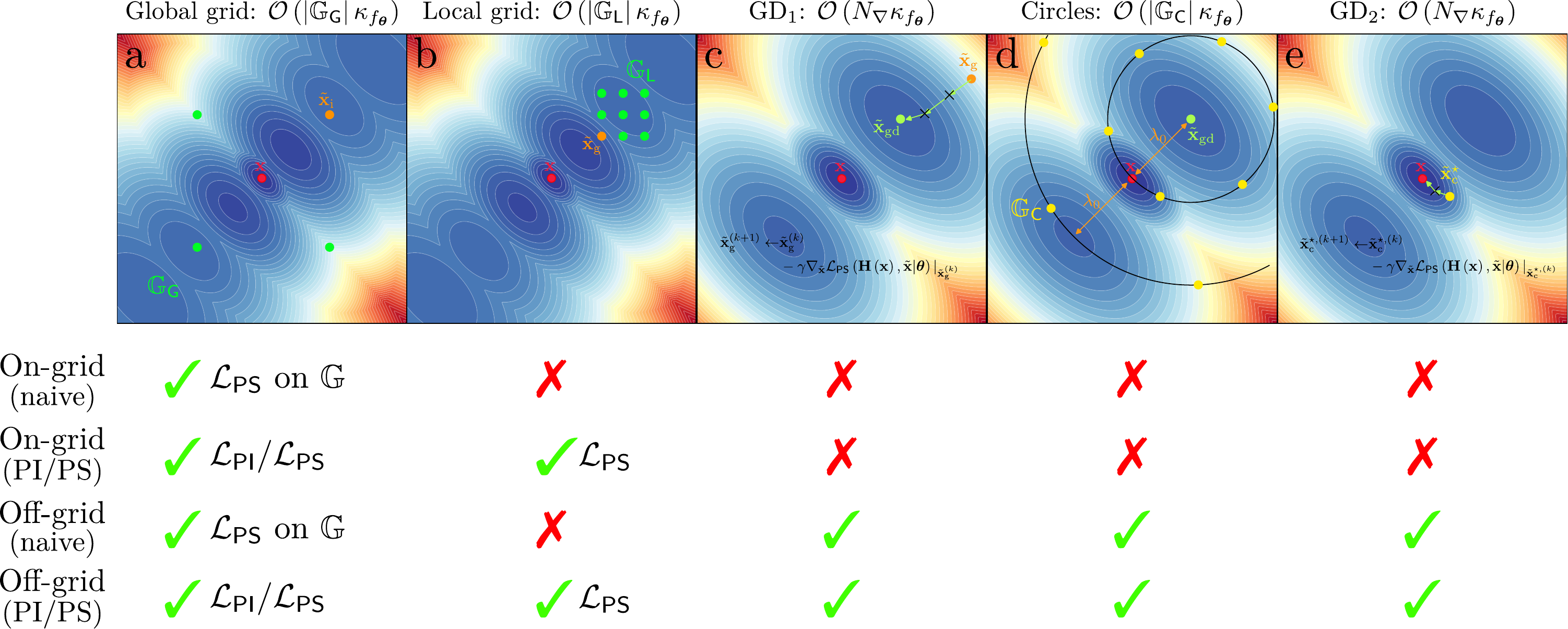}
        \caption{Illustration of the proposed localization scheme steps. $\mathrm{a}$: a grid-search is performed on the global grid $\Gt$. $\mathrm{b}$: a second grid-search is performed on the local grid $\Gl$, constructed from the obtained location at the previous step. $\mathrm{c}$: a first gradient-descent procedure is performed, and a \Redit{local minimum} is reached. $\mathrm{d}$: locations are sampled on circles of center $\tilde{\mathbf{x}}_{\mathrm{gd}}$ and radii $k \lambda_0$. $\mathrm{e}$: a second gradient-descent procedure is performed from the best location in the sampled circle locations, from a PS distance perspective. The bottom row illustrates the steps and loss functions involved in the proposed method (Off-Grid (PI/PS)) and its variants, which are presented in details in Section~\ref{sec:experiments}.}
        \label{fig:circle_explanation}
    \end{figure*}

    \begin{table}[t]
        \caption{Complexity comparison of the proposed approaches.}
        \centering
        \scriptsize
        \begin{tabular}{cccc}
            \toprule
            & On-grid (naive) & Off-grid (naive) & Off-grid (PI/PS)\\
            \midrule
            \makecell{Time \\complexity} & $\mathcal{O}\left(\abs{\Gn} \kappa_{\ftheta}\right)$ & $\makecell{\mathcal{O}\left(\left(\abs{\Gn} + \abs{\Gc} \right.\right. \\+ \left.\left.2N_{\nabla}\right) \kappa_{\ftheta}\right)}$ & $\makecell{\mathcal{O}\left(\left(\abs{\Gt} + \abs{\Gl} \right.\right.\\ + \left.\left.\abs{\Gc} + 2N_{\nabla}\right) \kappa_{\ftheta}\right)}$ \\
            \midrule
            \makecell{Forward \\passes nb.} & $218$k & $219.2$k & $11.2$k \\
            \bottomrule
        \end{tabular}
        \label{table:complexity_comp}
    \end{table}

    \noindent\textbf{Complexity optimization.} To mitigate the computational complexity of the proposed method, a bi-level grid approach is proposed. Initially, an exhaustive search is performed on a coarse grid, denoted as $\Gt \subset \mathbb{S}$, to determine an initial location estimate $\tilde{\mathbf{x}}_{\mathrm{i}}$. This \textit{global} grid is constructed based solely on the topological knowledge of the scene $\mathbb{S}$, which is used to define the grid boundaries. Subsequently, a finer \textit{local} grid, $\Gl \subset \mathbb{S}$, is generated around $\tilde{\mathbf{x}}_{\mathrm{i}}$, and a refined location estimate, $\tilde{\mathbf{x}}_{\mathrm{g}}$, is obtained through a grid search over $\Gl$. Finally, a gradient-descent optimization process is applied starting from this refined location estimate. The loss function used to obtain the initialization location, as defined in Eq.~\eqref{eq:init_grid_search}, can either be the Frobenius distance loss presented in Eq.~\eqref{eq:frob_sim_component} or \Redit{the phase-insensitive (PI)} distance introduced in~\cite{LeMagoarou21}, which is defined as follows:
    \begin{equation}\label{eq:PI_def}
        \mathcal{L}_{\mathsf{PI}}\left(\mathbf{H}\left(\mathbf{x}\right),\tilde{\mathbf{x}} \vert \boldsymbol{\theta}\right) \triangleq \sqrt{2-2\frac{\abs{\tr{\mathbf{H}\left(\mathbf{x}\right)^\herm \ftheta\left(\tilde{\mathbf{x}}\right)}}}{\norm{\mathbf{H}\left(\mathbf{x}\right)}{\mathsf{F}} \norm{\ftheta\left(\tilde{\mathbf{x}}\right)}{\mathsf{F}}}}.
    \end{equation}
    This PI distance does not distinguish maxima from minima due to the modulus operator in its definition. While this characteristic limits its applicability for achieving precise localization, it enhances the reliability of obtaining an estimate sufficiently close to the global minimum. Indeed, this inability to distinguish between minima and maxima results in a more gradual minimum profile, in contrast to the highly localized nature of the PS distance minima. The respective behaviors of the PS and PI distances for this localization problem will be illustrated in Section~\ref{sec:experiments}.

    A complexity comparison of the proposed approaches is presented in Table~\ref{table:complexity_comp}, where $\kappa_{\ftheta}$ represents the network forward pass complexity. The \textit{on-grid} approach refers to the localization without gradient-based refinement, whereas the \textit{off-grid} approach considers it. The PI/PS notation indicates which loss function is used for the grid-search on $\Gt$. As the bi-level grid approach allows to obtain a local grid center estimate that is close to the global minimum, it is possible to consider a low cardinality in $\Gl$, so that:
    \begin{equation}
        \abs{\Gt} + \abs{\Gl} + \abs{\Gc} + 2N_{\nabla} \ll \abs{\Gn}.
    \end{equation}
    Specifically, considering $\abs{\Gn} = 218.10^3$ (uniform grid on a $100$m$^2$ square scene, with $\lambda_0/4$ inter-element spacing), $\abs{\Gt} = 1.10^3$, $\abs{\Gl} = 9.10^3$, $\abs{\Gc} = 1.10^3$, and $N_{\nabla} = 100$ yields the second line in Table~\ref{table:complexity_comp}, which clearly highlights the overall complexity improvement obtained with the proposed approach.

\section{Experiments}\label{sec:experiments}

In this section, the localization performance of the proposed estimator is evaluated against traditional fingerprinting baselines such as $k$-NN, on realistic synthetic data in both indoor and outdoor scenes. 

\subsection{Evaluation framework}\label{sec:evaluation_framework}

\noindent\textbf{System parameters.} For all upcoming experiments, $N_s = 64$ frequencies are considered, with a central frequency $f_0 = 3.5$ GHz and a bandwidth of $20$ MHz. The associated central wavelength is $\lambda_0 \simeq 8.57$ cm. The BS is equipped with a uniform linear array composed of $N_a = 64$ antennas with half central wavelength spacing.

\noindent\textbf{Scene characteristics.} Three distinct scenes are considered, as represented in Fig.~\ref{fig:sionna_scene}: an outdoor urban environment in the \textit{Etoile} neighborhood of Paris, denoted as $\Sone$, and \Redit{two distinct indoor environments} with metallic obstacles, denoted as $\Stwo$ and $\Sthree$. The primary distinction between $\Stwo$ and $\Sthree$ is the higher proportion of locations without LoS paths in $\Sthree$ (around $50\%$), making it representative of a more complex radio environment. Additionally, in $\Sone$, the antenna array is located close to the UE locations, while it is far in $\Stwo$ and $\Sthree$. These contrasting settings enable the evaluation of the proposed localization method in scenarios representative of outdoor cellular deployments and indoor industrial environments. 

\begin{figure}[t]
    \centering
    \includegraphics[width=.95\columnwidth]{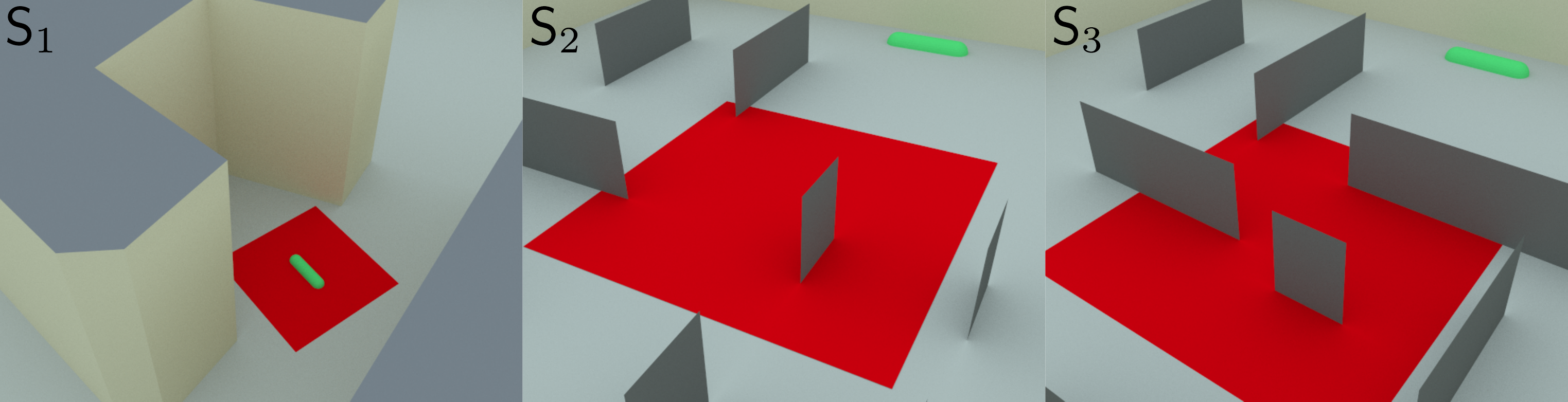}
    \caption{$3$D view of the considered scenes: the red plane represents the possible training/test/evaluation locations and the green spheres represent the BS antennas. In $\Stwo$ and $\Sthree$, gray planes represent metallic obstacles. For $\Sone$, only the locations in front of the BS are considered during the localization performance evaluation.}
    \label{fig:sionna_scene}
\end{figure}

\noindent\textbf{Location generation.} Each scene consists of a $L_{\mathbb{S},x} L_{\mathbb{S},y}$ m$^2$ plane considered as the location space $\mathbb{S}$. Within this location space, training, test, and evaluation locations are sampled. 

\begin{remark}
    \Redit{Note that obstacles in the propagation environment are taken into account during the sampling process by defining exclusion regions within which no locations can be sampled. This is formalized by introducing an exclusion location space $\mathbb{S}_{\texttt{e}}$. This approach enables the integration of accessibility constraints into the empirical validation framework, particularly for the $\Stwo$ and $\Sthree$ scenes, which include physical obstacles in the region of interest (denoted by the red plane in Fig.~\ref{fig:sionna_scene}).}
\end{remark}

\Redit{
The training locations, used to train the $\ftheta$ model, and denoted as $\mathcal{G}_{\mathsf{train}}$, are uniformly sampled as, $\forall \mathbf{x}_i \in \mathcal{G}_{\mathsf{train}}$:
\begin{equation}\label{eq:uniform_sampling}
    \hspace{-.399\baselineskip}\mathbf{x}_i \sim \mathcal{U}\left(\left(\left[-L_{\mathbb{S},x}/2, L_{\mathbb{S},x}/2\right] \times \left[-L_{\mathbb{S},y}/2, L_{\mathbb{S},y}/2\right]\right) \setminus \mathbb{S}_{\texttt{e}} \right).
\end{equation}
}
Test locations, denoted as $\mathcal{G}_{\mathsf{test}}$, are used to validate the mapping-learning capabilities of $\ftheta$. They form a uniform grid with an inter-element spacing of $\lambda_0/4$: \Redit{this yields a train/test ratio of around $7\%$.} Evaluation locations, collected in $\mathcal{G}_{\mathsf{loc}}$, are uniformly sampled within $\mathbb{S}$, following Eq.~\eqref{eq:uniform_sampling}, and are specifically used to assess the performance of the proposed localization method. Importantly, $\mathcal{G}_{\mathsf{train}} \cap \mathcal{G}_{\mathsf{test}} = \varnothing$, $\mathcal{G}_{\mathsf{train}} \cap \mathcal{G}_{\mathsf{loc}} = \varnothing$, and $\mathcal{G}_{\mathsf{test}} \cap \mathcal{G}_{\mathsf{loc}} = \varnothing$ ensuring that localization performance is evaluated on locations independent of those used for training and testing. Note that for $\Sone$, $L_{\mathbb{S},x} = 10$ m and $L_{\mathbb{S},y} = 5$ m, while $L_{\mathbb{S},x} = L_{\mathbb{S},y} = 10$ m for $\Stwo$ and $\Sthree$. Additionally, for all scenes, the training location density is set to $175$locs.$/$m$^2 \simeq 1.3$locs.$/\lambda_0^2$~\footnote{\Review{Results from our prior study~\cite{chatelier_loc2chan24} demonstrated effective mapping learning at a training location density of $0.5$ locs/$\lambda_0^2$, suggesting that satisfactory localization performance can be expected at lower densities.}}

\noindent\textbf{Channel generation.} The datasets are obtained through the radio-propagation digital twin of the Sionna library~\cite{sionna}. Since ray-tracing methods rely on fundamental electromagnetic principles, the resulting channels, both in indoor and outdoor scenes, can be considered realistic. 

\begin{remark}
    \Redit{Note that the ray-tracing approach is inherently limited by the accuracy of the radio-propagation digital twin. For instance, in a given scene, all buildings often share the same material, surfaces are typically modeled as unrealistically flat, and accurately capturing scattering effects induced by material roughness remains a challenging task.}
\end{remark}

For each antenna and location in the scene, ray-tracing techniques identify the propagation paths and compute the corresponding frequency-domain channel coefficients. The number of propagation paths depends on the location, with up to two consecutive reflections considered, while diffraction is neglected. The complete dataset consists of the locations and associated channel matrices, e.g. the training dataset is defined as:
\begin{equation}
    \mathcal{D}_{\mathsf{train}} = \left\{ \mathbf{x}_i \in \mathcal{G}_{\mathsf{train}}, \mathbf{H}\left(\mathbf{x}_i\right) \in \mathcal{H}_{\mathsf{train}} \right\}_{i=1}^{N_{\mathsf{train}}},
\end{equation}
where $\mathcal{H}_{\mathsf{train}}$ consists of the computed channel matrices at the training locations, and $N_{\mathsf{train}} = \abs{\mathcal{G}_{\mathsf{train}}}$ is the number of training locations.

\noindent\textbf{Evaluation metrics.} Let $\mathbf{x}$ denote the true UE location and $\hat{\mathbf{x}}$ its estimate. The performance of the localization methods is evaluated through statistical analysis of the localization error $\norm{\mathbf{x}-\hat{\mathbf{x}}}{2}$, using metrics such as the median, $10\%$ and $90\%$ quantiles.

\noindent\textbf{Baselines.} It is proposed to consider the weighted $k$-NN method on the training dataset as a fingerprinting baseline. Specifically, for a given channel $\mathbf{H}\left(\mathbf{x}\right)$, the $k$-best locations are computed as:
\begin{equation}
    \left\{\tilde{\mathbf{x}}_1, \cdots, \tilde{\mathbf{x}}_k\right\} = \underset{\tilde{\mathbf{x}} \in \mathcal{G}_{\mathsf{train}}}{\argmink} \frac{\norm{\mathbf{H}\left(\mathbf{x}\right)-\mathbf{H}\left(\tilde{\mathbf{x}}\right)}{\mathsf{F}}^2}{\norm{\mathbf{H}\left(\mathbf{x}\right)}{\mathsf{F}}^2},
\end{equation}
and the associated weights are, $\forall i \in \llbracket 1, k\rrbracket:$
\begin{equation}
    \alpha_i = \left(\frac{\norm{\mathbf{H}\left(\mathbf{x}\right)-\mathbf{H}\left(\tilde{\mathbf{x}}_i\right)}{\mathsf{F}}^2}{\norm{\mathbf{H}\left(\mathbf{x}\right)}{\mathsf{F}}^2}\right)^{-1},
\end{equation}
which are then normalized as $\tilde{\alpha}_i = \alpha_i/\sum_{i=1}^k \alpha_i$. The estimated location through $k$-NN is finally computed as:
\begin{equation}
    \hat{\mathbf{x}} = \sum_{i=1}^k \tilde{\alpha}_i \tilde{\mathbf{x}}_i.
\end{equation}

Considering this fingerprinting baseline on the training dataset, used to train the neural generative model, ensures a fair comparison with the proposed method, as both approaches use exactly the same data to generate a location estimate.

\subsection{Experimental results}\label{subsec:experimental_results}

Unless stated otherwise, in each scene, the following grid cardinalities are considered: $\abs{\Gt} = 1.10^3$, $\abs{\Gl} = 140.10^3$ for the PS init. method on the $\Stwo$ and $\Sthree$ scenes, and $\abs{\Gl} = 9.10^3$ for \Rours{both} the PS init. method on the $\Sone$ scene, and \Rours{the} PI init. method on all scenes. This cardinality difference in complex radio environments originates from the need for a larger coverage area in the PS case, which is attributed to the PS distance yielding a less accurate initial location estimate than the PI distance in such scenarios. Finally, $N_{\nabla} = 100$ gradient descent steps are considered.

\noindent\textbf{PI/PS distance behavior.} Fig.~\ref{fig:PI_PS_injectivity} illustrates the behavior of the PI and PS distances in the $\Sone$ scene for a given true location $\mathbf{x}$. It can be observed that the PS distance exhibits highly localized minima, which hinders the reliability of the initialization location estimate through the grid search on the global grid. In contrast, the PI distance exhibits a smoother, more gradual loss profile, with very high loss values anywhere except within a broad neighborhood in the true location direction, illustrated by the yellow area in the zoomed-out view of Fig.~\ref{fig:PI_PS_injectivity}. This facilitates the identification of an accurate initial location estimate through the global grid search. The closer spacing of minima in the PI distance, compared to the PS distance, can be attributed to its inability to distinguish maxima from minima, as previously discussed.



\noindent\textbf{Localization performance.} Fig.~\ref{fig:all_scene_perf} presents the performance of each approach by considering the boxplots of the localization error on $1.10^4$ independent locations. The low/high bars of the boxplots respectively represent the $10\%$ and $90\%$ quantiles, the central orange line represents the median, and the box boundaries correspond to the first and third quartiles. The different methods are as follows: 
\begin{itemize}
    \item MLP and $3$-NN: fingerprinting baselines. \Redit{The MLP has four complex hidden layers of sizes $4096$, $2048$, $1024$, and $512$, takes complex channel matrices $\mathbf{H}\left(\mathbf{x}\right)$ as inputs, and outputs a location estimate. It comprises approximately $22$ million learnable parameters, and is trained for $500$ epochs.}
    \item On-grid (PI/PS): proposed method using the PI or PS loss in the exhaustive search on $\Gt$, but without the gradient-descent refinement.
    \item Off-grid (PS w/o circle): proposed method using the PS loss in the exhaustive search on $\Gt$, with gradient-descent, but without the circles approach (see Fig.~\ref{fig:circle_explanation} $\mathrm{d}$).
    \item Off-grid (PI/PS): proposed method using the PI or PS loss in the exhaustive search on $\Gt$, with both gradient-descent and circles approaches.
\end{itemize}

The proposed  off-grid (PI) method consistently outperforms the fingerprinting baselines across all scenes, achieving an improvement of two to three orders of magnitude in the median localization performance. Notably, its median value reaches $0.01$ cm in $\Sone$, and approximately $0.06$ cm in $\Sthree$, demonstrating its strong performance throughout the entire scene, and across challenging radio environment\Rours{s}. It is worth noting that \Rours{these} values are well below the central wavelength, $\lambda_0 \simeq 8.57$ cm, highlighting the sub-wavelength accuracy of the proposed method. As anticipated, the off-grid approach significantly improves localization accuracy by eliminating the constraint of a fixed grid. Moreover, the median error values for the on-grid PS approach are approximately on the order of the central wavelength, highlighting the high sensitivity of the PS distance to local minima. In contrast, the PI distance demonstrates greater robustness to this issue, corroborating the insights from Eq.~\eqref{eq:PI_def} and Fig.~\ref{fig:PI_PS_injectivity}. For both initialization methods, the circle-based approach effectively mitigates this limitation, and enhances localization resolution. The degraded performance of the proposed method in the $\Stwo$ and $\Sthree$ scenes, compared to $\Sone$, can be attributed to the increased complexity of the radio environments simulated in $\Stwo$ and $\Sthree$. The presence of strong NLoS paths, and the complete absence of LoS paths for around $50\%$ of the locations in $\Sthree$ results in a much more complex loss landscape, making convergence to the global minimum more challenging. In this scenario, the circle-based approach, which is designed to mitigate the local minima issue, reaches its limit as the PS distance function exhibits increasingly chaotic minima. This can be seen with the superior localization performance of the off-grid (PI) approach on the $\Sthree$ scene when considering only locations with LoS paths (denoted as w. LoS only in Fig.~\ref{fig:all_scene_perf}), as compared to locations with only NLoS paths (denoted as w. NLoS only in Fig.~\ref{fig:all_scene_perf}). This chaotic minima behavior is illustrated in Fig.~\ref{fig:loss_landscape} with the loss landscape over the $\Sone$ and $\Sthree$ scenes.

\begin{figure}[t]
    \centering
    \includegraphics[width=.9\columnwidth]{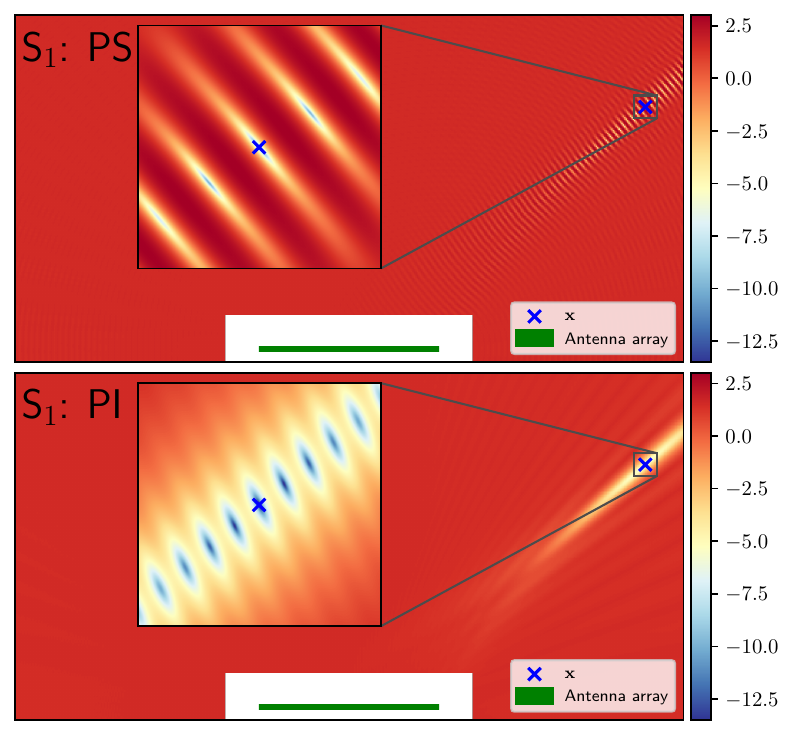}
    \caption{Comparison of $\mathcal{L}_{\mathsf{PS}}\left(\mathbf{H}\left(\mathbf{x}\right),\tilde{\mathbf{x}} \vert \boldsymbol{\theta}\right)$ and $\mathcal{L}_{\mathsf{PI}}\left(\mathbf{H}\left(\mathbf{x}\right),\tilde{\mathbf{x}} \vert \boldsymbol{\theta}\right)$, in dB, for a given true location $\mathbf{x}$ in the $\Sone$ scene.}
    \label{fig:PI_PS_injectivity}
\end{figure}

\begin{figure*}[htbp]
    \centering
    \begin{minipage}[b]{\textwidth}
        \centering
        \includegraphics[width=.9\textwidth]{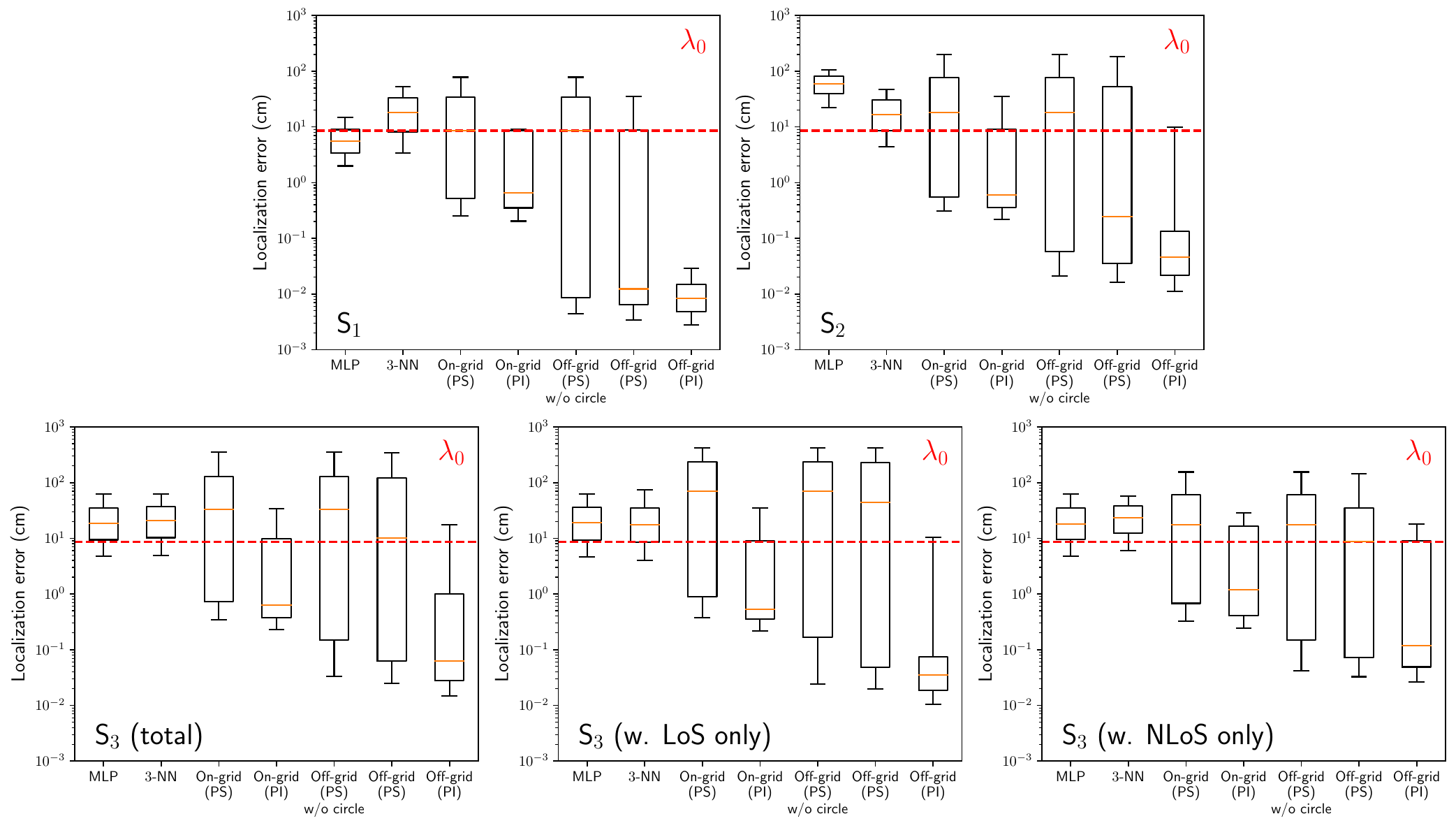}
        \caption{\Review{Statistics on the localization error (in cm) across the $\Sone$, $\Stwo$ and $\Sthree$ scenes. The red horizontal line represents the central wavelength $\lambda_0$.}}
        \label{fig:all_scene_perf}
    \end{minipage}
    \vfill
    \begin{minipage}[b]{0.49\textwidth}
        \centering
        \includegraphics[width=.99\columnwidth]{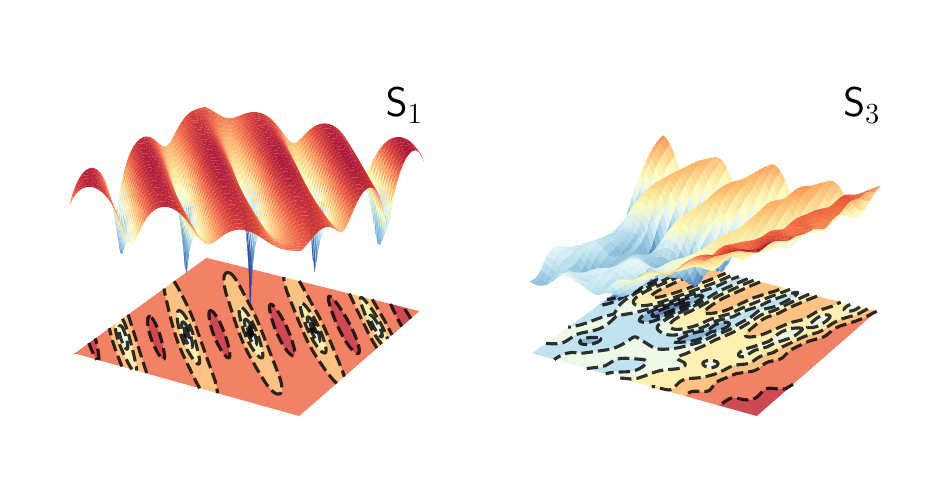}
        \caption{PS distance landscapes for the $\Sone$ and $\Sthree$ scenes.}
        \label{fig:loss_landscape}
    \end{minipage}
    \hfill
    \begin{minipage}[b]{0.49\textwidth}
        \centering
        \includegraphics[width=.9\columnwidth]{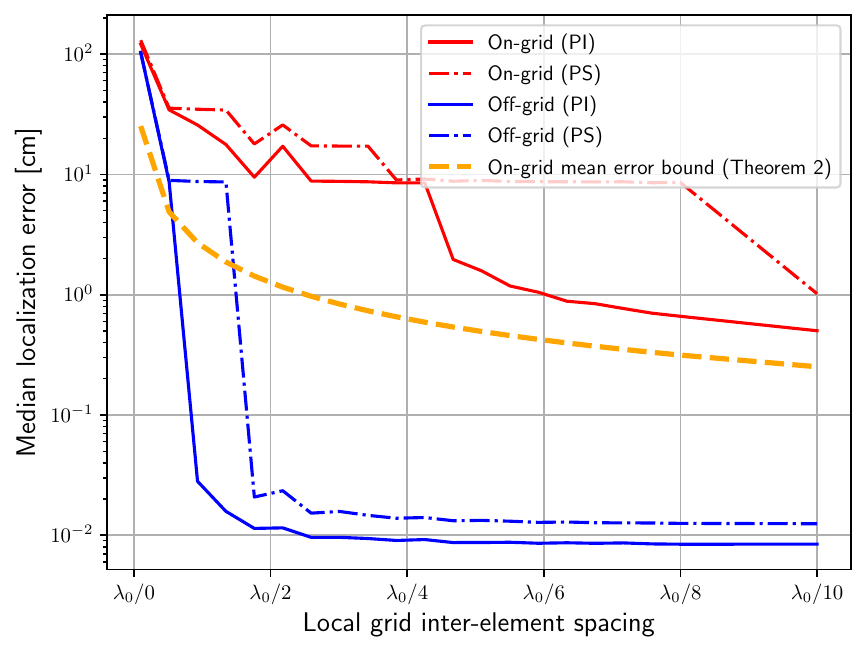}
        \caption{Localization performance evolution wrt. $\Gl$ inter-element spacing, over the $\Sone$ scene.}
        \label{fig:multi_lambda_div}
    \end{minipage}
    \vfill
    \begin{minipage}[b]{\textwidth}
        \centering
        \includegraphics[width=.95\textwidth]{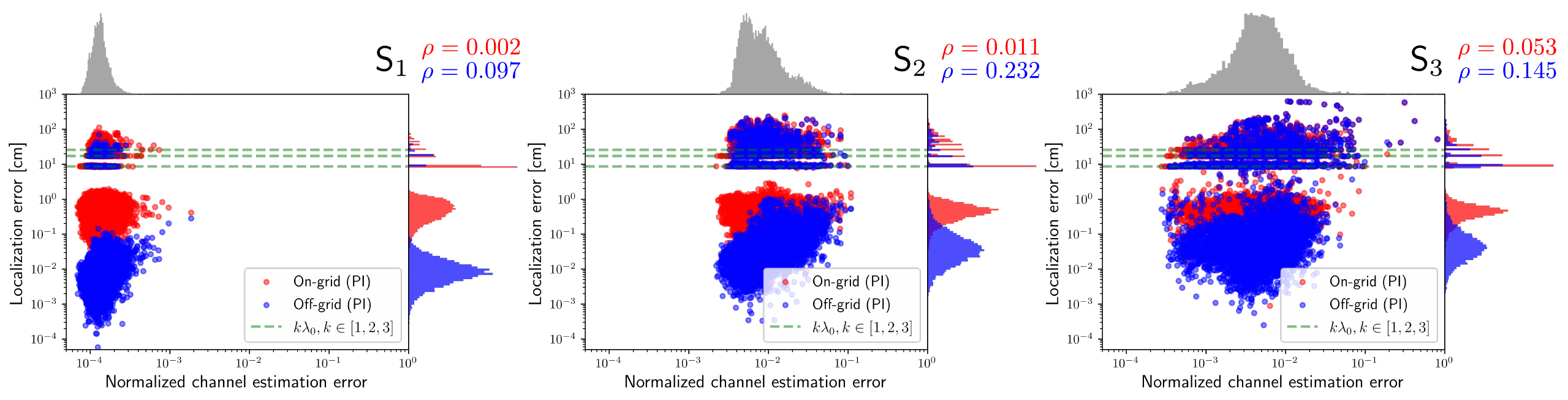}
        \caption{$2$D histogram of the location error wrt. the model error, across the $\Sone$, $\Stwo$ and $\Sthree$ scenes. A zero-valued Pearson's correlation coefficient $\rho$ indicates that the location and channel estimation errors are uncorrelated. Green dashed lines represent integer multiples of the central wavelength $\lambda_0$.}
        \label{fig:2D_histo}
    \end{minipage}

\end{figure*}

\noindent\textbf{Local grid spacing performance impact.} Fig.~\ref{fig:multi_lambda_div} illustrates the median localization error performance as a function of the inter-element spacing in $\Gl$, over the $\Sone$ scene. It is observed that the on-grid approach requires a very dense grid to approach the best-case bound defined in Theorem~\ref{thm:lambda_bound}. However, as anticipated, the off-grid approach surpasses this on-grid best-case performance bound, even with a relatively coarse grid. This demonstrates the efficiency of the off-grid approach in delivering accurate localization estimates, while maintaining a controlled computational complexity. This is enabled by the ability of the $\ftheta$ model to serve as a generative neural channel model with low estimation error.

\noindent\textbf{Relationship between model and localization performance.} \Review{In Fig.~\ref{fig:2D_histo}, each point in the bi-dimensional histograms corresponds, for a given true location, to the pair formed by the localization error and the channel estimation error, \Rours{i.e}. NMSE, at that true location, with marginals displayed on the right and top axes.} In the $\Sone$ scene, the near-zero Pearson's correlation coefficient indicates that the localization error is largely uncorrelated with the model error, whereas a noticeable correlation appears in the $\Stwo$ and $\Sthree$ scenes. This is likely due to the higher model NMSE in $\Stwo$ and $\Sthree$ (around $-10$ dB) compared to $\Sone$ (around $-30$dB). Nevertheless, even with high model NMSE, localization performance remains satisfactory. \Review{This implies that even when the reference channel, i.e. the channel estimated by the network at the true location, is imperfectly reconstructed, the resulting location estimate remains highly accurate, highlighting the robustness of the proposed method to noisy channels during inference.} Additionally, it can be observed that in $\Stwo$ and $\Sthree$, the off-grid approach benefits only at locations where the model error remains within acceptable limits. This behavior is consistent across all locations in $\Sone$, where the model error is uniformly low due to the less challenging nature of the scene. \Rours{Furthermore}, localization errors tend to accumulate at multiples of the central wavelength: this is an expected effect of the local minima in the PS distance, as discussed in Corollary~\ref{corol:minima_distance}. \Review{The off-grid approach, which incorporates the circles method, partially mitigates this local minima issue: owing to its ability to escape local minima through the circles sampling strategy, the proposed method can be viewed as a model-based annealing approach.}


\noindent\textbf{Computational complexity.} Table~\ref{table:time_comparison} reports the empirical inference computational complexity of the proposed methods and baselines over the $\Stwo$ scene. The average inference time for each method is computed over $1.10^4$ independent locations, with all \Rours{computations} performed on the same NVIDIA A$40$ GPU. The on-grid (naive) approach employs a uniform grid across the scene with an inter-element spacing of $\lambda_0/4$. The off-grid (naive) approach extends \Rours{it} by incorporating the proposed gradient descent and circle-based methods, but lacks the computational complexity optimization provided by the bi-level grid approach. Lastly, the off-grid (PI/PS) approaches follow Algorithm~\ref{alg:prop_approach}, with the $\Gl$ grid for the PS initialization being $16$ times larger than that for the PI initialization, which explains its superior mean execution time. While the \Rours{optimized method, i.e. off-grid (PI/PS),} exhibits higher computational complexity compared to the classical fingerprinting approach, it offers a substantial improvement in localization accuracy. Moreover, as demonstrated in Table~\ref{table:complexity_comp}, the proposed off-grid approach reduces the number of required forward passes by a factor of twenty with respect to the on-grid approach, resulting in an almost tenfold reduction in mean execution time. This is further illustrated by comparing the computational complexity of the proposed method with the off-grid naive approach, which experiences a high mean execution time due to the absence of computational complexity optimization.


\noindent\textbf{Memory requirements comparison.} As shown in Fig.~\ref{fig:all_scene_perf}, the $k$-NN fingerprinting approach, that uses the neural generative model training dataset, achieves relatively satisfactory performance, with sub-meter median localization accuracy across all scenes. However, this approach necessitates the storage of $2\abs{\mathcal{G}_{\mathsf{train}}} + 2N_a N_s\abs{\mathcal{G}_{\mathsf{train}}}$ real scalars, which, when evaluated on the $\Stwo$ training dataset, amounts to storing $121.3$M real scalars. In comparison, the proposed localization method only requires storing the $\abs{\boldsymbol{\theta}} = 9.1$M real learnable parameters of $\ftheta$, resulting in a thirteenfold reduction in the required memory. This number of learnable parameters originates from the hyperparameters selected for the MB-$\tilde{\mathbf{\Psi}}_{\mathbf{a}}$ network in~\cite{chatelier_loc2chan24}. Assuming a $32$-bit floating-point representation, the fingerprinting approach demands $485.2$ \Redit{Mb} of memory, whereas the proposed method only requires $36.4$ \Redit{Mb}. This substantial memory reduction is achieved by eliminating the need for a large fingerprinting comparison dictionary. Importantly, this memory efficiency does not compromise localization accuracy, as the proposed method is not constrained by the number of stored coefficients, unlike classical fingerprinting methods, as highlighted in Eq.~\eqref{eq:fingerprinting_resolution}. The independence of the proposed method’s localization performance from memory requirements is illustrated in Fig.~\ref{fig:memory_performance}. The proposed method consists of the off-grid (PI) approach, while the $3$-NN results for different memory requirements are obtained by subsampling $\mathcal{D}_{\mathsf{train}}$ on the $\Stwo$ scene.


\begin{table}[t]
    \caption{\Review{Average inference time over the $\Stwo$ scene.}}
    \centering
    \begin{tabular}{ccccccc}
        \toprule
        & \multicolumn{2}{c}{Baselines} & On-grid & \multicolumn{3}{c}{Off-grid}\\
        \cmidrule(lr){2-3} \cmidrule(lr){4-4} \cmidrule(lr){5-7} & \Review{MLP} & $3$-NN & Naive & Naive & PS & PI\\
        \midrule
        $\esp{t_{\mathsf{exec}}}$ [s] & \Review{$1.10^{-3}$} & $1.10^{-2}$ & $16.39$ & $17.72$ & $8.57$ & $1.75$ \\
        \bottomrule
    \end{tabular}
    \label{table:time_comparison}
\end{table}

\noindent\textbf{\Rours{Impact of noisy uplink channels during inference.}} \Rours{The proposed method assumes noiseless uplink channels during inference. To account for an imperfect sensing process, the performance of the proposed approach is also evaluated using noisy uplink channels at inference time. More specifically, the uplink channels used for performance evaluation are generated as follows:
\begin{equation}
    \tilde{\mathbf{H}}\left(\mathbf{x}\right) = \mathbf{H}\left(\mathbf{x}\right) + \mathbf{N},
\end{equation}
where:
\begin{equation}
    \mathbf{N} = \frac{\sigma}{\sqrt{2}}\left(\mathbf{Z}_1+\mathrm{j}\mathbf{Z}_2\right),
\end{equation}
with:
\begin{equation}
    \sigma = \sqrt{\frac{\norm{\mathbf{H}\left(\mathbf{x}\right)}{\mathsf{F}}^2}{N_a N_s \text{SNR}_{\text{lin}}}},
\end{equation}
and where each element of $\mathbf{Z}_1 \in \mathbb{R}^{N_a\times N_s}$ and $\mathbf{Z}_2 \in \mathbb{R}^{N_a\times N_s}$ are independently drawn for the $\mathcal{N}\left(0,1\right)$ distribution. Experimental results with noisy channels at $\text{SNR}_{\text{dB}} = 5$ dB are shown in Fig.~\ref{fig:noisy_channels}, where dark boxplots correspond to noise-free results, and blue ones to noisy results. It can be observed that the proposed approach demonstrates strong resilience to noise, as performance degradation remains minor across all methods and scenes. These results indicate that, even with an imperfect channel sensing process during the inference phase, the resulting impact on localization performance remains within acceptable limits.}

\begin{figure*}[!t]
    \centering
    \includegraphics[width=.95\textwidth]{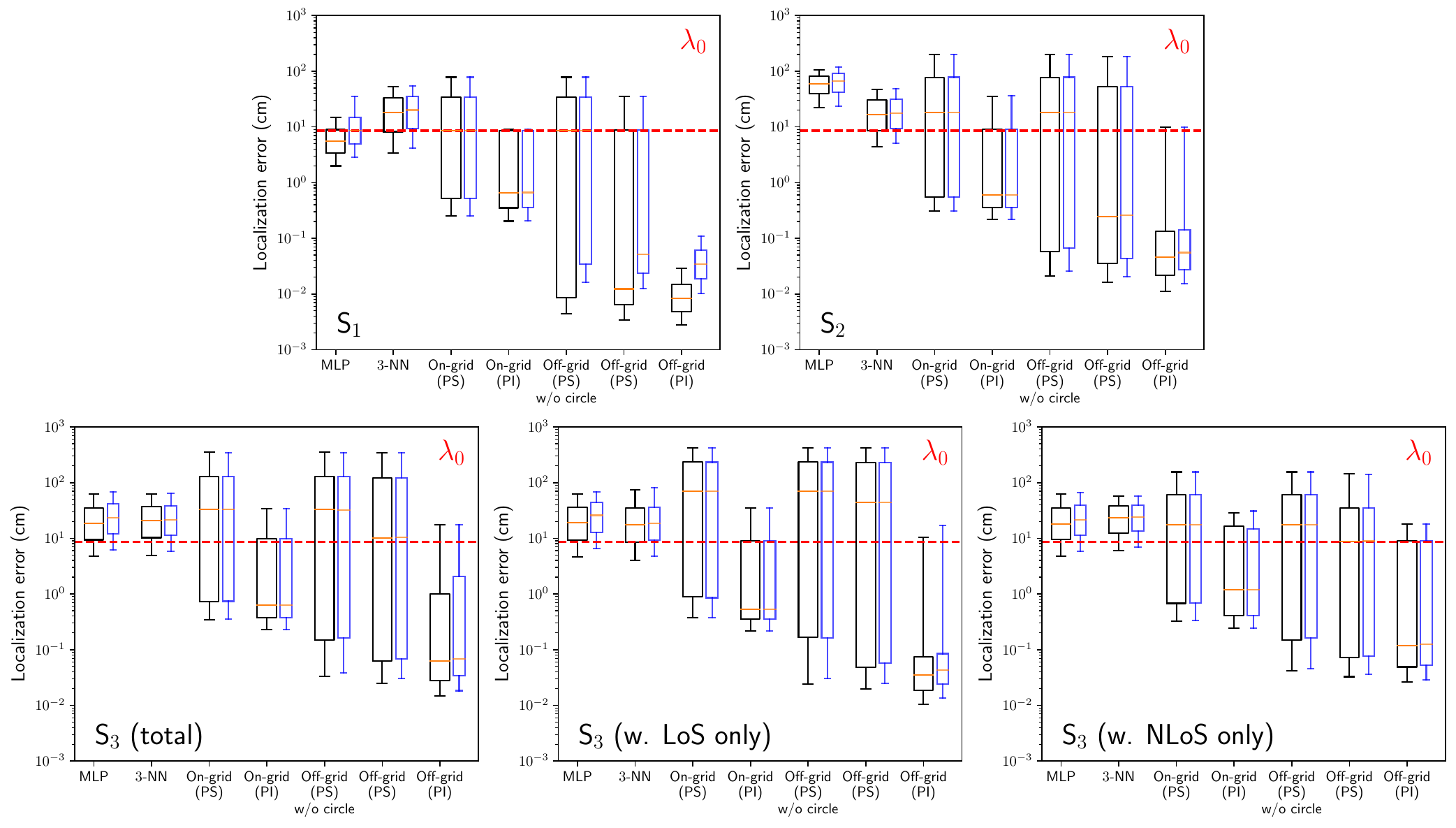}
    \caption{\Rours{Evaluation of noisy channels impact on the proposed localization method performance, during test time. Dark boxplots correspond to noise-free results, and blue ones to noisy results with $\text{SNR}_{\text{dB}} = 5$ dB.}}
    \label{fig:noisy_channels}
\end{figure*}

\noindent\textbf{\Redit{Impact of sparse training locations.}} \Redit{Hitherto, all empirical results assume a training location density of $175$locs./m$^2$ $\simeq 1.3$locs./$\lambda_0^2$. This implies that the location-to-channel neural network is trained on a highly spatially dense dataset. To investigate the impact of sparser training locations, Fig.~\ref{fig:loc_dens_impact} illustrates the performance of the proposed Off-grid (PI) localization method on the $\Stwo$ scene as the spatial density of the training dataset decreases. It can be seen that, as expected, the localization performance decreases with the training spatial density. Nevertheless, the proposed method maintains strong performance in the low-density regime. In particular, when considering a density of $0.18$locs./$\lambda_0^2$ $\simeq 25$locs./m$^2$, that is well below one training sample per squared wavelength, the localization performance remains satisfactory, achieving sub-wavelength median localization error.}

\begin{figure}[!t]
    \centering
    \includegraphics[width=.95\columnwidth]{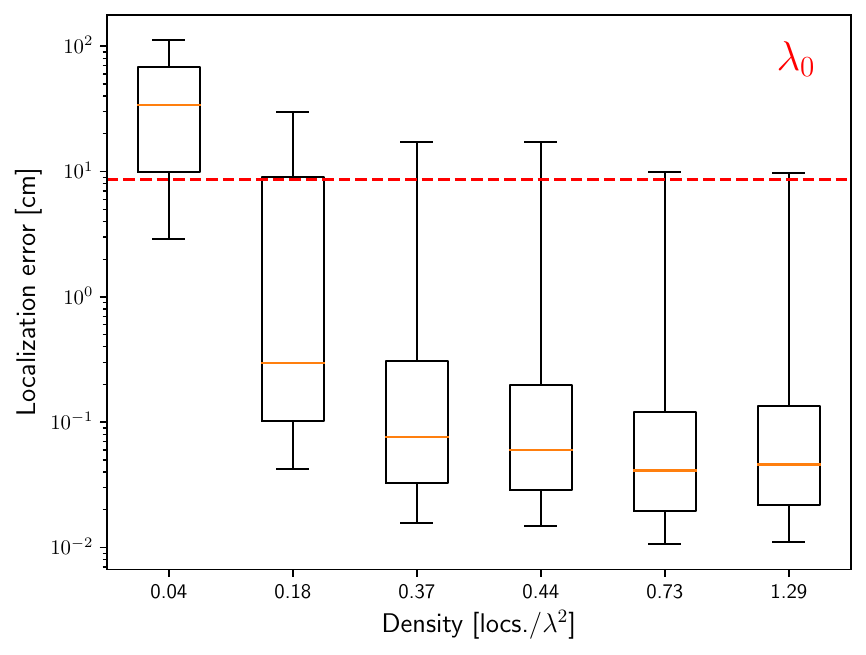}
    \caption{\Redit{Evaluation of sparse training location impact on the proposed method localization performance: Off-grid (PI) on $\Stwo$ scene.}}
    \label{fig:loc_dens_impact}
\end{figure}

\section{Conclusion and future work}\label{sec:conclusion}

This paper investigated the use of the model-based machine learning paradigm to perform user localization. Specifically, a neural architecture introduced in earlier studies, designed to learn the location-to-channel mapping, was leveraged to enhance the localization accuracy of the well-known fingerprinting method, while simultaneously reducing its memory requirements. Simulations conducted on realistic data, encompassing both LoS and NLoS challenging radio environments, demonstrated that the proposed method achieves sub-wavelength localization accuracy, providing up to a two-order-of-magnitude improvement over classical fingerprinting approaches. Unlike traditional fingerprinting methods, the performance of the proposed approach is not contingent on the size of a stored comparison dictionary. Instead, the neural model functions as a generative channel model, enabling on-the-fly inference of channel coefficients and thereby eliminating the need for an extensive precomputed dictionary. This resulted in a tenfold reduction in memory requirements compared to classical fingerprinting techniques. \Redit{In its current form, the proposed method could be used in dynamic environments to detect changes in the propagation environment. Indeed, since the method relies on a neural channel model trained for a fixed propagation environment, a significant modification of this environment, e.g. the introduction of a new obstacle generating strong additional propagation paths, would result in degraded channel estimation performance. Given a set of channel-location pairs generated by the neural channel model trained under nominal propagation conditions, environmental variations could be identified by monitoring the maximum similarity over time between a sensed channel and this reference set.}

\Redit{Future research directions can be organized into four main categories, encompassing both complexity/performance optimization, and adaptation of the proposed method to more realistic propagation environments.}

\noindent\textbf{\Redit{Computational complexity optimization.}} \Redit{To enable the use of the proposed method in practical systems, it is crucial to optimize its computational complexity. This can be achieved, for example, by refining the cardinalities of the various location grids.}

\noindent\textbf{\Redit{Performance optimization in NLoS environments.}} \Redit{Further optimization of the circle-based strategy in NLoS environments could be explored. In such scenarios, the presence of multipath propagation induces a more complex loss landscape and increases the risk of local minima. Enhancing robustness in these challenging conditions could further improve localization accuracy.}

\noindent\textbf{\Redit{Adaptation to time-varying propagation environments.}} \Redit{Extending the method to time-varying propagation environments could be achieved through online learning strategies. In such setups, continuously collected channel measurements could be used to iteratively refine the neural channel model, enabling its continuous adaptation over time. In this context, the domain-incremental learning approach for 5G indoor localization proposed in~\cite{Raichur26} appears particularly well suited, as it enables the rapid adaptation of neural models to environmental changes.}

\noindent\textbf{\Redit{Adaptation to realistic propagation environments.}} \Redit{An immediate research direction is the application of the proposed method to real measured channels, for instance using the Ultra Dense Indoor MaMIMO CSI dataset~\cite{Chenglong22}, or the CAEZ dataset~\cite{Wiesmayr25}. Additionally, the proposed method could be applied in propagation environments affected by hardware impairments, e.g. by incorporating a calibration procedure to mitigate their impact, as shown in~\cite{yassine2022,Chatelier2022,Mateos_Ramos24,chatelier24_diffMUSIC}. Finally, the method presented in this paper relies on ray-tracing to generate a channel coefficients dataset used to train a generative neural channel model. Since ray-tracing methods cannot fully capture all the underlying physical phenomena governing propagation channels, future work should aim at characterizing the modeling error between ray-tracing and measured channels in propagation environments of interest~\cite{Muthineni23,Bhatia23,Bhatia23b,Hoydis24learned}. The impact of this mismatch on the proposed method's performance should then be evaluated. In addition, hybrid approaches combining ray-tracing data and measured channels, e.g. transfer learning techniques, could be investigated to mitigate the potential impact of this mismatch.}

\bibliographystyle{IEEEtran}
\bibliography{refs/biblio.bib}

\appendices
\section{Proof of Theorem~\ref{thm:min_error}}\label{appendices:thm_min_error}
From Definition~\ref{def:ftheta_error}, it follows that:
\begin{align}
    \hspace{-.1\baselineskip}\forall \mathbf{x} \in \mathbb{S},\text{ } \Xi\left(\mathbf{x}\right) \mathop{\longrightarrow}_{\epsilon_{\ftheta} \rightarrow 0} 0 &\Rightarrow  \forall \mathbf{x} \in \mathbb{S},\text{ } \ftheta\left(\mathbf{x}\right) \mathop{\longrightarrow}_{\epsilon_{\ftheta} \rightarrow 0} \mathbf{H}\left(\mathbf{x}\right).
\end{align}
Assuming that Eq.~\eqref{eq:sim_measure} satisfies Definition~\ref{def:exact_injectivity}, and invoking Definition~\ref{def:estimator}, one obtains:
\begin{align}
    \forall \mathbf{x} \in \mathbb{S},\text{ } \hat{\mathbf{x}}\left(\mathbf{H}\left(\mathbf{x}\right) \vert \boldsymbol{\theta}, \mathbb{S}\right)&\mathop{\longrightarrow}_{\epsilon_{\ftheta} \rightarrow 0} \mathbf{x}, \notag \\
    \Rightarrow \sup_{\mathbf{x} \in \mathbb{S}}\norm{\mathbf{x}-\hat{\mathbf{x}}\left(\mathbf{H}\left(\mathbf{x}\right) \vert \boldsymbol{\theta}, \mathbb{S}\right)}{2}&\mathop{\longrightarrow}_{\epsilon_{\ftheta} \rightarrow 0} 0.
\end{align}

\section{Proof of Theorem~\ref{thm:lambda_bound}}\label{appendix:a}
Let $\epsilon \left(\mathbf{x}\right) \triangleq \norm{\mathbf{x}-\hat{\mathbf{x}}\left(\mathbf{H}\left(\mathbf{x}\right) \vert \boldsymbol{\theta}, \mathbb{G} \right)}{2}$. A lower bound on $\espo{\epsilon \left(\mathbf{x}\right)}{\mathbf{x}\sim \mathcal{P}_{\mathbf{x}}}$ can be derived by considering the best-case scenario, i.e. that the location estimate obtained by solving Eq.~\eqref{eq:mb_fingerprinting} on $\mathbb{G}$ is also optimal in the sense of the $\ell_2$ norm, which reads as:
\begin{align}
    \hat{\mathbf{x}}\left(\mathbf{H}\left(\mathbf{x}\right) \vert \boldsymbol{\theta}, \mathbb{G} \right) &= \argmin_{\tilde{\mathbf{x}} \in \mathbb{G}} \norm{\mathbf{H}\left(\mathbf{x}\right) - \ftheta\left(\tilde{\mathbf{x}}\right)}{\mathsf{F}}\notag\\
    &= \argmin_{\tilde{\mathbf{x}} \in \mathbb{G}} \norm{\mathbf{x}-\tilde{\mathbf{x}}}{2}.
\end{align}
By definition:
\begin{equation}
    \espo{\epsilon \left(\mathbf{x}\right)}{\mathbf{x}\sim \mathcal{P}_{\mathbf{x}}} = \int_{\mathbb{R}^2} \epsilon \left(\mathbf{x}\right) p_{\mathbf{x}} \left(x,y\right) \mathrm{d}x \mathrm{d}y, \label{eq:expectancy_error}
\end{equation}
where $p_{\mathbf{x}} \left(x,y\right)$ is the density function of $\mathcal{P}_{\mathbf{x}}$. Since under the considered assumptions, $\epsilon(\mathbf{x})$ denotes the distance between the true location and its closest point on the uniform grid $\Gn$ with inter-element spacing $\nu$, it can be assumed that $\mathbf{x}$ is uniformly distributed within a square of side length $\nu/2$, leading to:
\begin{equation}
    p_{\mathbf{x}} \left(x,y\right)= \frac{4}{\nu^2}\mathds{1}_{\left\{x\in\mathbb{R}^+ \vert x \leq \nu/2 \right\}}\left(x\right)\mathds{1}_{\left\{y\in\mathbb{R}^+ \vert y \leq \nu/2 \right\}}\left(y\right).
\end{equation}
Eq.~\eqref{eq:expectancy_error} can then be rewritten in polar coordinates as:
\begin{equation}
    \espo{\epsilon \left(\mathbf{x}\right)}{\mathbf{x}\sim \mathcal{P}_{\mathbf{x}}} \geq \frac{4}{\nu^2}\int_{0}^{\frac{\pi}{2}}\int_{0}^{r_m\left(\theta\right)} r^2 \mathrm{d}r \mathrm{d}\theta,
\end{equation}
where the axial integration bound is obtained from:
\begin{align}
    \begin{cases}
        x = \frac{\nu}{2}\\
        y = \frac{\nu}{2}
    \end{cases} &\Rightarrow
    \begin{cases}
        r_m\left(\theta\right)=\frac{\nu}{2\cos \theta}\\
        r_m\left(\theta\right)=\frac{\nu}{2 \sin \theta}
    \end{cases}\notag \\
    &\Rightarrow r_m\left(\theta\right) \triangleq \min\left(\frac{\nu}{2\cos \theta},\frac{\nu}{2 \sin \theta}\right).
\end{align}
One then obtains:
\begin{align}
    &\espo{\epsilon \left(\mathbf{x}\right)}{\mathbf{x}\sim \mathcal{P}_{\mathbf{x}}} \notag \\
    &\geq \frac{4}{\nu^2} \left(\int_{0}^{\frac{\pi}{4}}\int_{0}^{\frac{\nu}{2\cos\theta}} r^2 \mathrm{d}r \mathrm{d}\theta + \int_{\frac{\pi}{4}}^{\frac{\pi}{2}}\int_{0}^{\frac{\nu}{2\sin\theta}} r^2 \mathrm{d}r \mathrm{d}\theta \right)\notag\\
    &= \frac{\nu}{6}\left(\int_{0}^{\frac{\pi}{4}} \frac{1}{\cos^3 \theta} \mathrm{d}\theta + \int_{\frac{\pi}{4}}^{\frac{\pi}{2}} \frac{1}{\sin^3 \theta} \mathrm{d}\theta \right).
\end{align}
Recalling that:
\begin{equation}
    \begin{cases}
        \int \frac{1}{\cos^3 \theta} \mathrm{d}\theta = \frac{1}{2}\left(\sec \theta \tan \theta + \ln \left(\abs{\sec \theta + \tan \theta}\right)\right)\\
        \int \frac{1}{\sin^3 \theta} \mathrm{d}\theta = \frac{1}{2}\left(-\csc \theta \cot \theta + \ln \left(\abs{\csc \theta - \cot \theta}\right)  \right)
    \end{cases},
\end{equation}
yields the following lower bound:
\begin{align}
    \espo{\epsilon \left(\mathbf{x}\right)}{\mathbf{x}\sim \mathcal{P}_{\mathbf{x}}} &\geq \frac{\nu}{6}\left(\sqrt{2}+\ln\left(\sqrt{2}+1\right)\right),
\end{align}
so that, $\forall \mathbf{x} \in \mathbb{S}$:
\begin{equation}
    \espo{\norm{\mathbf{x}-\hat{\mathbf{x}}\left(\mathbf{H}\left(\mathbf{x}\right) \vert \boldsymbol{\theta}, \mathbb{G} \right)}{2}}{\mathbf{x}\sim \mathcal{P}_{\mathbf{x}}} \geq \nu \delta,
\end{equation}
with $\delta = \frac{1}{6}\left(\sqrt{2}+\ln\left(\sqrt{2}+1\right)\right)$, which concludes the proof. 

\section{Proof of Theorem~\ref{thm:minima_frob}}\label{appendix:c}
From Eq.~\eqref{eq:channel_model_virtual_sources}, introducing $d_{l,i}\left(\mathbf{x}\right) \triangleq \norm{\mathbf{x}-\mathbf{a}_{l,i}}{2}$ and $\phi_j \triangleq 2\pi/\lambda_j$ yields:
\begin{align}
    &\norm{\mathbf{H}\left(\mathbf{x}\right) - \mathbf{H}\left(\mathbf{x} + \bdelta\right)}{\mathsf{F}} \notag\\
    &= \sqrt{\sum_{i=1}^{N_a} \sum_{j=1}^{N_s}\abs{\sum_{l=1}^{L_p} \gamma_l \left(\dfrac{\ej[-]{\phi_j d_{l,i}\left(\mathbf{x}\right)}}{d_{l,i}\left(\mathbf{x}\right)} - \dfrac{\ej[-]{\phi_j d_{l,i}\left(\mathbf{x}+\bdelta\right)}}{d_{l,i}\left(\mathbf{x}+\bdelta\right)} \right)}^2}\notag\\
    &= \sqrt{\sum_{i=1}^{N_a} \sum_{j=1}^{N_s}\abs{\sum_{l=1}^{L_p} f_{i,j,l}\left(\bdelta\right)}^2}.\label{eqtemp:chan_diff}
\end{align}
Considering $\norm{\bdelta}{2} \leq \epsilon_{\bdelta}$, so that $d_{l,i} \left(\mathbf{x}\right)/ d_{l,i}\left(\mathbf{x}+\bdelta\right) \simeq 1$, $f_{i,j,l}\left(\bdelta\right)$ can be rewritten as:
\begin{align}
    f_{i,j,l}\left(\bdelta\right) &\simeq  \frac{\gamma_l}{d_{l,i}\left(\mathbf{x}\right)} \left(\ej[-]{\phi_j d_{l,i}\left(\mathbf{x}\right)} - \ej[-]{\phi_j d_{l,i}\left(\mathbf{x} + \bdelta\right)}\right).
\end{align}

The function $\norm{\mathbf{H}\left(\mathbf{x}\right)- \mathbf{H}\left(\mathbf{x}+\bdelta\right)}{\mathsf{F}}$ being lower-bounded by $0$, a sufficient condition to obtain its global minimum is given by:
\begin{align}
    \bdelta &\in \bigcap_{\left(i,j,l\right) \in \mathbb{I}} \left\{\bdelta \in \mathbb{R}^2 \vert f_{i,j,l}\left(\bdelta\right) = 0 \right\}\notag\\
    \Leftrightarrow \bdelta &\in \bigcap_{\left(i,j,l\right) \in \mathbb{I}} \left\{\bdelta \in \mathbb{R}^2 \vert \ej[-]{\phi_j d_{l,i}\left(\mathbf{x}\right)} = \ej[-]{\phi_j d_{l,i}\left(\mathbf{x}+\bdelta\right)} \right\},
\end{align}
with $\mathbb{I} = \llbracket 1,N_a\rrbracket \times \llbracket 1,N_s\rrbracket \times \llbracket 1,L_p\rrbracket$. Let $k\in\mathbb{Z}$, then:
\begin{align}
    &\ej[-]{\phi_j d_{l,i}\left(\mathbf{x}\right)} = \ej[-]{\phi_j d_{l,i}\left(\mathbf{x}+\bdelta\right)} \notag\\
    &\Leftrightarrow \phi_j d_{l,i}\left(\mathbf{x}\right) = \phi_j d_{l,i}\left(\mathbf{x}+\bdelta\right) + k 2\pi \notag\\
    &\Leftrightarrow \norm{\mathbf{x} - \mathbf{a}_{l,i} + \bdelta}{2} = \norm{\mathbf{x} - \mathbf{a}_{l,i}}{2} + k\lambda_j \notag\\
    &\Leftrightarrow \mathbf{x} + \bdelta \in \mathcal{C}\left(\mathbf{a}_{l,i},\norm{\mathbf{x}-\mathbf{a}_{l,i}}{2} + k \lambda_j\right).
\end{align}
The global-minimum condition is thus given by:
\begin{equation}\label{eqtemp:condition}
    \bdelta \in \bigcap_{\left(i,j,l,k\right) \in \mathbb{L}} \mathcal{C}\left(\mathbf{a}_{l,i}-\mathbf{x},\norm{\mathbf{x}-\mathbf{a}_{l,i}}{2} + k \lambda_j\right),
\end{equation}
with $\mathbb{L} = \llbracket 1,N_a\rrbracket \times \llbracket 1,N_s\rrbracket \times \llbracket 1,L_p\rrbracket \times \mathbb{Z}$. Note that, in the absence of ambiguity, the circles intersect exclusively at $0_{\mathbb{R}^2}$ for $k = 0$, implying that $\boldsymbol{\delta} = 0_{\mathbb{R}^2}$. When $\boldsymbol{\delta} = 0_{\mathbb{R}^2}$, $\norm{\mathbf{H}(\mathbf{x}) - \mathbf{H}(\mathbf{x} + \boldsymbol{\delta})}{\mathsf{F}}$ attains its minimum possible value, which confirms that Eq.~\eqref{eqtemp:condition} characterizes a global minimum.

\section{Proof of Corollary~\ref{corol:approx_injectivity}}\label{appendix:d}
From Eq.~\eqref{eq:sim_measure}, the similarity measure is maximized when the Frobenius distance is minimized. Hence, satisfying Definition~\ref{def:exact_injectivity} reduces to showing that $\boldsymbol{\delta} = 0_{\mathbb{R}^2}$ is the global minimum of $\norm{\mathbf{H}(\mathbf{x}) - \mathbf{H}(\mathbf{x} + \boldsymbol{\delta})}{\mathsf{F}}$, which was established in Theorem~\ref{thm:minima_frob} under the no-ambiguity assumption.

\section{Proof of Corollary~\ref{corol:minima_distance}}\label{appendix:e}
Assuming a dominant propagation path is equivalent to considering that:
\begin{equation}
    \exists l_d \in \llbracket 1,L_p \rrbracket, \forall l \in \llbracket 1,L_p \rrbracket, \abs{\gamma_{l_d}} \gg \abs{\gamma_l}.
\end{equation}
This translates into negligible path components in Eq.~\eqref{eqtemp:chan_diff} so that:
\begin{align}
    &\norm{\mathbf{H}\left(\mathbf{x}\right)- \mathbf{H}\left(\mathbf{x}+\bdelta\right)}{\mathsf{F}} \notag \\
    &\hspace{-.5\baselineskip}\simeq \sqrt{\sum_{i=1}^{N_a}\sum_{j=1}^{N_s}\abs{\gamma_{l_d} \left(\dfrac{\ej[-]{\phi_j d_{l_d,i}\left(\mathbf{x}\right)}}{d_{l_d,i}\left(\mathbf{x}\right)} - \dfrac{\ej[-]{\phi_j d_{l_d,i}\left(\mathbf{x}+\bdelta\right)}}{d_{l_d,i}\left(\mathbf{x}+\bdelta\right)} \right)}^2}.
\end{align}
The minima condition defined in Eq.~\eqref{eq:thm_condition_global_min} can then be rewritten as:
\begin{equation}\label{eqtemp:intersec}
    \bdelta \in \bigcap_{\left(i,j,k\right) \in \mathbb{J}} \mathcal{C}\left(\mathbf{a}_{l_d,i}-\mathbf{x},\norm{\mathbf{x} - \mathbf{a}_{l_d,i}}{2}+k \lambda_j\right),
\end{equation}
with $\mathbb{J} = \llbracket 1,N_a\rrbracket \times \llbracket 1,N_s\rrbracket \times \mathbb{Z}$. The minima thus lie on circles centered at the locations of the antenna array elements, with radii determined by the system's wavelengths. In conventional communication systems, wavelength variation across the bandwidth is negligible, such that $\forall j \in \llbracket 1, N_s \rrbracket$:
\begin{equation}
    \lambda_j \simeq \lambda_0 \triangleq \frac{1}{N_s} \sum_{i=1}^{N_s} \frac{c}{f_i},
\end{equation}
which yields:
\begin{equation}\label{eqtemp:intersec_approx}
    \bdelta \in \bigcap_{\left(i,k\right) \in \mathbb{K}} \mathcal{C}\left(\mathbf{a}_{l_d,i}-\mathbf{x},\norm{\mathbf{x} - \mathbf{a}_{l_d,i}}{2}+k \lambda_0\right),
\end{equation}
with $\mathbb{K} = \llbracket 1,N_a\rrbracket \times \mathbb{Z}$. In Eq.~\eqref{eqtemp:intersec_approx}, the intersection of the circles originating from the different antennas depends only on the chosen integer $k \in \mathbb{Z}$, which translates into consecutive minima being approximately spaced by $\lambda_0$.

\end{document}